\providecommand{\U}[1]{\protect\rule{.1in}{.1in}}
\newtheorem{theorem}{Theorem} [section]
\newtheorem{conjecture}[theorem]{Conjecture}
\newtheorem{corollary}[theorem]{Corollary}
\newtheorem{lemma}[theorem]{Lemma}
\newenvironment{proof}[1][Proof]{\noindent\textbf{#1.} }{\ \rule{0.5em}{0.5em}}
\begin{document}

\author{David Tankus\\Department of Software Engineering\\Sami Shamoon College of Engineering, Ashdod, ISRAEL\\davidt@sce.ac.il}
\title{Recognizing Generating Subgraphs in Graphs without Cycles of Lengths $6$ and 7}
\date{}
\maketitle

\begin{abstract}
Let $B$ be an induced complete bipartite subgraph of $G$ on vertex sets of bipartition $B_{X}$ and $B_{Y}$. The subgraph $B$ is {\it generating} if there exists an independent set $S$ such that each of $S \cup B_{X}$ and $S \cup B_{Y}$ is a maximal independent set in the graph. If $B$ is generating, it \textit{produces} the restriction $w(B_{X})=w(B_{Y})$.  

Let $w:V(G) \longrightarrow\mathbb{R}$ be a weight function. We say that $G$ is $w$-well-covered if all maximal independent sets are of the same weight. The graph $G$ is $w$-well-covered if and only if $w$ satisfies all restrictions produced by all generating subgraphs of $G$. Therefore, generating subgraphs play an important role in characterizing weighted well-covered graphs.

It is an \textbf{NP}-complete problem to decide whether a subgraph is generating, even when the subgraph is isomorphic to $K_{1,1}$ \cite{bnz:related}. We present a polynomial algorithm for recognizing generating subgraphs for graphs without cycles of lengths 6 and 7.
\end{abstract}

\section{Introduction}
\subsection{Definitions and Notation}
Throughout this paper $G=(V,E)$ is a simple (i.e., a finite, undirected,
loopless and without multiple edges) graph with vertex set $V = V(G)$ and edge
set $E = E(G)$.

Cycles of $k$ vertices are denoted by $C_{k}$. When we say that $G$ does not
contain $C_{k}$ for some $k \geq3$, we mean that $G$ does not admit subgraphs
isomorphic to $C_{k}$. It is important to mention that these subgraphs are not
necessarily induced. Let $\mathcal{G}(\widehat{C_{i_{1}}},..,\widehat{C_{i_{k}%
}})$ denote the family of all graphs which do not contain $C_{i_{1}}$,...,$C_{i_{k}}$.

Let $S\subseteq V$ be a non-empty set of vertices, and let $i\in\mathbb{N}$. Then
\[
N_{i}(S)=\{v \in V| \ min_{s\in S}\ d(v,s)=i\}, \ 
N_{i}[S]=\{v \in V| \ min_{s\in S}\ d(v,s)\leq i\}
\]
where $d(x,y)$ is the minimal number of edges required to construct a path
between $x$ and $y$. If $i\neq j$ then $N_{i}(S)\cap N_{j}(S)$$=$$\emptyset$. We abbreviate $N_{1}(S)$ and $N_{1}[S]$ to $N(S)$ and $N[S]$, respectively. If
$S=\{v\}$ for some $v\in V$, then $N_{i}(\{v\})$, $N_{i}[\{v\}]$, $N(\{v\})$, and $N[\{v\}]$, are abbreviated to $N_{i}(v)$, $N_{i}[v]$, $N(v)$, and $N[v]$, respectively.

A set of vertices $S\subseteq V$ is \textit{independent }if for every $x,y\in
S$, $x$ and $y$ are not adjacent. It is clear that an empty set is
independent. An independent set is called {\it maximal} if it is not contained in another independent set. An independent set is {\it maximum} if the graph does not contain an independent set with a higher cardinality.
A graph is called \textit{well-covered} if every maximal independent set is maximum. The problem of finding a maximum cardinality
independent set in an input graph is \textbf{NPC}. However, if the input is
restricted to well-covered graphs, then a maximum cardinality independent set
can be found polynomially using the \textit{greedy algorithm}.

Let $T\subseteq V$. Then $S$ \textit{dominates} $T$ if $T \subseteq N[S]$. If $S$ and $T$ are both empty, then $N(S)=\emptyset$, and $S$
dominates $T$. If $S$ is a maximal independent set of $G$, then it dominates
the whole graph.

Let $w:V \longrightarrow\mathbb{R}$ be a weight function defined on the
vertices of $G$. For every set $S \subseteq V$, define $w(S)=\Sigma_{s \in
S}w(s)$. Then $G$ is {\it $w$-well-covered} if all maximal independent sets of $G$
are of the same weight. The set of weight functions $w$ for which $G$ is
$w$-well-covered is a \textit{vector space} \cite{cer:degree}.

The recognition of well-covered graphs is known to be \textbf{co-NP}-complete.
This was proved independently in \cite{cs:note} and \cite{sknryn:compwc}. In
\cite{cst:structures} it is proven that the problem remains \textbf{co-NP}%
-complete even when the input is restricted to $K_{1,4}$-free graphs. However,
the problem is polynomially solvable for $K_{1,3}$-free graphs
\cite{tata:wck13f,tata:wck13fn}, for graphs with girth at least $5$
\cite{fhn:wcg5}, for graphs that contain neither $4$- nor $5$-cycles
\cite{fhn:wc45}, for graphs with a bounded maximal degree \cite{cer:degree},
and for chordal graphs \cite{ptv:chordal}.

Recently, Levit and Tankus constructed a polynomial time algorithm for finding
the vector space of weight functions $w$ such that the input graph
$G\in\mathcal{G}(\widehat{C_{4}},\widehat{C_{5}},\widehat{C_{6}})$ is $w$-well-covered \cite{lt:wc456}. They used the
following notion. Let $B$ be an induced complete bipartite subgraph of $G$ on
vertex sets of bipartition $B_{X}$ and $B_{Y}$. Assume that there exists an
independent set $S$ such that each of $S\cup B_{X}$ and $S\cup B_{Y}$ is a
maximal independent set of $G$. Then $B$ is called a \textit{generating} subgraph of
$G$, and it \textit{produces} the restriction: $w(B_{X})=w(B_{Y})$. The set $S$ is called a \textit{witness} that $B$ is generating. A graph $G$ is $w$-well-covered for a weight function $w:V(G) \longrightarrow\mathbb{R}$ if and only if $w$ satisfies all restrictions produced by all generating subgraphs of $G$. Therefore, generating subgraphs play an important role in characterizing $w$-well-covered graphs.  

In the restricted case that the generating subgraph $B$ is isomorphic to
$K_{1,1}$, call its vertices $x$ and $y$. In that case $x$ and $y$ are said to
be \textit{related}, $xy$ is a \textit{relating} edge, and $w(x)=w(y)$ for
every weight function $w$ such that $G$ is $w$-well-covered. The witness of
the related vertices $x$ and $y$ is an independent set $S$, containing neither
$x$ nor $y$, such that both $S\cup\{x\}$ and $S\cup\{y\}$ are maximal
independent sets in the graph. 

The decision problem whether an edge in an input graph is relating is
\textbf{NP-}complete \cite{bnz:related}, and it remains \textbf{NP-}complete even
when the input is restricted to graphs without cycles of lengths $4$ and $5$
\cite{lt:relatedc4} or to bipartite graphs \cite{lt:npc}. Therefore,
recognizing generating subgraphs is also \textbf{NP-}complete in these cases.
However, recognizing relating edges can be done in polynomial time if the
input is restricted to graphs without cycles of lengths $4$ and $6$
\cite{lt:relatedc4}, and to graphs without cycles of lengths $5$ and $6$
\cite{lt:wc456}.

Recognizing generating subgraphs is \textbf{NP-}complete when the input is
restricted to $K_{1,4}$-free graphs \cite{lt:npc} or to graphs with girth at
least 6 \cite{lt:npc}. However, the problem is polynomial solvable when the
input is restricted to graphs without cycles of lengths $4$, $6$ and $7$
\cite{lt:wc4567}, to graphs without cycles of lengths $4$, $5$ and $6$
\cite{lt:wc456}, and to graphs without cycles of lengths $5$, $6$ and $7$
\cite{lt:wc456}.

\subsection{Main Results}
The subject of this paper is graphs without cycles of lengths 6 and 7. In Section 2 we define {\it extendable vertices}, and present a polynomial algorithm for recognizing extendable vertices in graphs without cycles of lengths 6 and 7.

\begin{theorem}
\label{c67ext} The following problem can be solved in polynomial time:\\
Input: A graph $G \in \mathcal{G}(\widehat{C}_{6},\widehat{C}_{7})$, and
a vertex $x \in V(G)$.\\
Question: Is $x$ extendable?
\end{theorem}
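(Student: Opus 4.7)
The plan is to reduce extendability of $x$ to a tractable combinatorial question on the close neighborhoods of $x$, exploiting the absence of $C_6$ and $C_7$ to bound the arity of the constraints that arise. Writing $N_i = N_i(x)$ for brevity, the first step is to argue that the problem reduces to deciding whether there exists an independent set $S \subseteq N_2 \cup N_3$ that dominates $N_2$. Any such $S$ is then greedily completable far away from $x$ (vertices at distance $\geq 4$ from $x$ cannot dominate $N_2$ and cannot obstruct the choices already made locally), so no witness is lost.

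Next, I would establish a structural lemma bounding how candidate dominators of $N_2$ can interact. Suppose $u_{1},u_{2} \in N_3$ are simultaneously placed in $S$ and dominate $v_{1},v_{2} \in N_2$ respectively, where the shortest $x$-$v_i$ path passes through $a_{i} \in N(x)$. Then the walk $u_{1}v_{1}a_{1}xa_{2}v_{2}u_{2}$ together with the $u_{1}$-$u_{2}$ independence imposed by $S$ must collapse via some coincidence ($a_{1}=a_{2}$, $v_{1}=v_{2}$, or a chord), for otherwise $G$ contains an induced-or-not $C_6$ or $C_7$, contradicting $G\in\mathcal{G}(\widehat{C}_{6},\widehat{C}_{7})$. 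A case analysis of the allowed coincidences yields a conflict relation between candidate dominators that is essentially pairwise, in the sense that every forbidden combination involves at most two candidates at a time.

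Finally, I would translate the local decisions into a 2-SAT instance (or equivalently, a small-arity constraint satisfaction problem): a Boolean variable for each candidate dominator in $N_2 \cup N_3$, clauses enforcing independence (each such clause involving only two variables, by the structural lemma), and for each $v \in N_2$ a clause demanding that at least one of its dominators be chosen. Since 2-SAT is solvable in polynomial time and the instance has size polynomial in $|V(G)|$, extendability of $x$ can be decided in polynomial time. The main obstacle is carrying out the case analysis in the structural lemma: borderline configurations in which two short $x$-to-$N_3$ paths nearly create a $C_6$ or $C_7$ but are saved by an unexpected chord or vertex identification need to be enumerated carefully, and it must be verified that every such configuration still admits a 2-variable encoding.
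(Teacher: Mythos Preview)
Your proposal has two genuine gaps.

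First, you are solving the wrong problem. In this paper, $x$ is \emph{extendable} iff there is no independent set $S\subseteq N_2(x)$ with $N(x)\subseteq N[S]$; the witness lives in $N_2(x)$ and must dominate $N_1(x)$. Your first paragraph instead asks for $S\subseteq N_2\cup N_3$ dominating $N_2$, and your structural lemma places the candidate dominators $u_1,u_2$ in $N_3$. With the correct layers, the six-vertex walk you write down becomes $u_1 v_1 x v_2 u_2$ with $u_i\in N_2$ and $v_i\in N_1$, which has length four, not six; adding an edge $u_1u_2$ gives a $C_5$, which is allowed. So the forbidden-cycle argument you sketch does not apply as stated.

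Second, even after fixing the layers, the 2\nobreakdash-SAT reduction does not go through. Your structural lemma at best shows that the \emph{independence} constraints are pairwise, which they already are by definition (one clause $\lnot p_u\vee\lnot p_w$ per edge). The problematic clauses are the \emph{domination} clauses ``for each $v\in N(x)$, at least one of its neighbours in $N_2(x)$ is chosen'': a single $v\in N(x)$ can have arbitrarily many neighbours in $N_2(x)$, so these are long positive clauses, and nothing in your lemma bounds their arity. Independent domination with only pairwise conflicts and long covering clauses is not 2\nobreakdash-SAT and is NP\nobreakdash-hard in general, so a further structural reduction is needed before any polynomial\nobreakdash-time solver applies.

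The paper's route is quite different: it analyses the connected components of $G[N_2(x)]$, shows (using only $C_6$\nobreakdash-freeness) that each component is a singleton, has a single neighbour in $N(x)$, or is a star $K_{1,r}$; it peels off the components that can be handled greedily, then (using $C_7$\nobreakdash-freeness) shows that any vertex of $N_2(x)$ with at least two neighbours in $N(x)$ is forced into every witness. After removing those forced vertices, every surviving component of $G[N_2(x)]$ is a star touching exactly two vertices of $N(x)$, and the remaining question ``can we orient each star to one of its two $N(x)$\nobreakdash-neighbours so that all of $N(x)$ is covered?'' is solved by a maximum\nobreakdash-flow / bipartite\nobreakdash-matching computation, not by 2\nobreakdash-SAT.
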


In Section 3 we use Theorem \ref{c67ext} to prove Theorem \ref{c67related}.

\begin{theorem}
\label{c67related} The following problem can be solved in polynomial time:\\
Input: A graph $G \in \mathcal{G}(\widehat{C}_{6},\widehat{C}_{7})$, and
an edge $xy\in E(G)$.\\
Question: Is $xy$ a relating edge?
\end{theorem}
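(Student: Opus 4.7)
The plan is to reduce recognition of relating edges to a polynomial number of queries of the kind answered by Theorem \ref{c67ext}, using that the class $\mathcal{G}(\widehat{C}_{6},\widehat{C}_{7})$ is closed under induced subgraphs, so the extendability oracle remains available on any subgraph produced along the way.

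First I would translate the condition that $xy$ is relating into a domination requirement on the induced subgraph $H := G[V(G)\setminus N[\{x,y\}]]$. A witness $S$ for $xy$ is precisely an independent set in $H$ satisfying two constraints: $S$ is a maximal independent set of $H$, and $S$ dominates in $G$ every vertex $u\in (N(x)\setminus N[y]) \cup (N(y)\setminus N[x])$, that is, every ``private'' neighbor of $x$ or of $y$. This reformulation is the cornerstone of the approach, since it trades the global maximality requirement in $G$ for a local one in a strictly smaller graph still lying in $\mathcal{G}(\widehat{C}_{6},\widehat{C}_{7})$.

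Next I would show how to search for $S$ using the extendable-vertex routine. Each private neighbor $u$ forces at least one of its $H$-neighbors $v$ into $S$; once such a $v$ is committed, the remaining problem is whether $v$ can be extended to a maximal independent set of the residual graph meeting the still-outstanding domination constraints, which is exactly what Theorem \ref{c67ext} is designed to answer. Each extendability check is polynomial, so what remains is to show that only polynomially many combinations of forced choices must be examined.

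The main obstacle will be bounding the branching. A naive enumeration over all private neighbors simultaneously is exponential, so one must exploit the absence of $C_{6}$ and $C_{7}$ to prove that candidate $H$-neighbors of two distinct private vertices $u_{1}$ and $u_{2}$ cannot be chosen independently without creating a short forbidden cycle through $u_{1}$, $x$ (or $y$), $u_{2}$ and a short $H$-path between them. Converting this structural observation into a polynomial bound on the number of configurations to check is the technical heart of the reduction; once it is in place, the algorithm is completed by running the subroutine of Theorem \ref{c67ext} on each residual subgraph and reporting success iff any branch yields a valid witness.
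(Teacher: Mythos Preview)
Your reformulation in terms of domination from $H = G \setminus N[\{x,y\}]$ is correct, and the instinct to reduce to extendability queries is right. But you miss the key structural lemma that eliminates all branching, and you also mischaracterize what the oracle of Theorem~\ref{c67ext} actually provides.

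The paper's Lemma~\ref{c6split} shows that $xy$ is relating if and only if \emph{both} of the following hold, independently of one another: there is an independent $S_x \subseteq N_2(x)\setminus N(y)$ dominating $N(x)\cap N_2(y)$, and there is an independent $S_y \subseteq N_2(y)\setminus N(x)$ dominating $N(y)\cap N_2(x)$. The crucial direction is that any such $S_x$ and $S_y$, found \emph{separately}, automatically have $S_x \cup S_y$ independent, because an edge between a vertex of $S_x$ and one of $S_y$ would close a $C_6$ through $x$ and $y$. So the two sides do not interact at all, and there is no branching: the algorithm simply calls the extendability routine once on the induced subgraph $G[(N_2(x)\cap N_3(y))\cup(N(x)\cap N_2(y))\cup\{x\}]$ with distinguished vertex $x$, and once on the symmetric subgraph with distinguished vertex $y$. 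Two calls total.

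Your plan instead treats each private neighbor $u$ as a branching point where one commits some $v \in N_H(u)$ and then asks whether ``$v$ can be extended to a maximal independent set of the residual graph meeting the still-outstanding domination constraints, which is exactly what Theorem~\ref{c67ext} is designed to answer.'' That is not what Theorem~\ref{c67ext} decides: given a graph and a vertex $z$, it tests whether $N(z)$ can be dominated by an independent subset of $N_2(z)$. It does not take a partial independent set together with a list of outstanding constraints and report whether the partial set extends. So the recursive calls in your scheme are to a problem for which you have no oracle. Moreover, the sentence ``converting this structural observation into a polynomial bound on the number of configurations to check is the technical heart of the reduction'' is precisely where the proof should be; you have located the difficulty but not resolved it. The missing idea is the decoupling above: once you observe that $S_x$ and $S_y$ can never conflict, the configuration count drops to one and the whole branching framework becomes unnecessary.
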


It is proved in \cite{lt:wc4567} and \cite{lt:wc456} that recognizing generating subgraphs can be done polynomially for $\mathcal{G}(\widehat{C}_{4},\widehat{C}_{6},\widehat{C}_{7})$, and $\mathcal{G}(\widehat{C}_{5},\widehat{C}_{6},\widehat{C}_{7})$.

\begin{theorem}
\label{c467gen} \cite{lt:wc4567} The following problem can be solved in polynomial time:\\
Input: A graph $G \in \mathcal{G}(\widehat{C}_{4},\widehat{C}_{6},\widehat{C}_{7})$, and
an induced complete bipartite subgraph $B$.\\
Question: Is $B$ generating?
\end{theorem}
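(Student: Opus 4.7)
The plan is to exploit the additional absence of $C_4$ to pin down the shape of $B$, and then to reduce the search for a witness of $B$ to the extendable-vertex problem provided by Theorem \ref{c67ext}. Since $G$ contains no $C_4$, and a $K_{2,2}$ subgraph would create a $C_4$, any induced complete bipartite subgraph $B$ of $G$ must satisfy $\min(|B_X|,|B_Y|)\le 1$. So, up to swapping parts, $B$ is a star: write $B_X=\{x\}$ and $B_Y=\{y_1,\ldots,y_k\}$, and $B$ reduces to $K_{1,k}$.

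For such a $B$, a set $S$ disjoint from $N[B]$ is a witness precisely when $S\cup\{x\}$ and $S\cup B_Y$ are both maximal independent sets of $G$. Unpacking the two maximality conditions, $S$ must be a maximal independent set of the subgraph $H:=G\setminus N[B]$ and must additionally dominate the asymmetry set
\[
U \;:=\; (N(x)\setminus N[B_Y]) \;\cup\; (N(B_Y)\setminus N[x]),
\]
since vertices of $N(x)\cap N(B_Y)$ are automatically dominated by one side of $B$ or the other, and only the vertices of $U$ place external demands on $S$. Thus the question ``Is $B$ generating?'' becomes: ``Does $H$ admit a maximal independent set that dominates $U$?''

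My approach to this reduced problem is to convert it into polynomially many queries of the extendable-vertex problem handled by Theorem \ref{c67ext}. The $C_4$-freeness is crucial here: any two vertices of $G$ share at most one common neighbor, so the constraint ``$S$ dominates $U$'' decomposes into essentially local choices. Concretely, for each $u\in U$ I would attach a small gadget to $H$ encoding ``$S$ contains at least one neighbor of $u$'', and then test, for suitable distinguished vertices of the augmented graph, whether they are extendable. The main obstacle is ensuring that the gadget construction does not introduce new cycles of length $6$ or $7$, so that Theorem \ref{c67ext} remains applicable; this requires a careful case analysis using the sparsity granted by $G\in\mathcal{G}(\widehat{C}_4,\widehat{C}_6,\widehat{C}_7)$, in particular controlling the interaction between gadget vertices and short paths already present in $G$.
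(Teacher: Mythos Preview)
Your reduction of ``$B$ generating'' to ``some maximal independent set of $H=G\setminus N[V(B)]$ dominates $U$'' is correct, but the proof stops at the crucial step: you propose an unspecified gadget construction to encode the domination constraints as extendable-vertex queries, and you yourself flag the preservation of $C_6$- and $C_7$-freeness as ``the main obstacle'' requiring ``careful case analysis''. No such analysis is supplied, and there is no evident reason it should succeed; attaching auxiliary vertices and edges to force ``$S$ meets $N(u)$'' for each $u\in U$ can very easily create $6$- or $7$-cycles with paths already present in $G$, and without the gadget you have no algorithm. As written, this is a plan with its hard part left open, not a proof.

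The paper sidesteps gadgets entirely and, in fact, never uses $C_4$-freeness: it proves the stronger Theorem~\ref{c67gen} for all of $\mathcal{G}(\widehat{C}_6,\widehat{C}_7)$, of which the present statement is an immediate instance. The key structural fact is Lemma~\ref{disjoint}: for distinct $a,b\in V(B)$, a vertex in $N_2(a)$ reached through $N(a)\setminus V(B)$ is never adjacent to a vertex in $N_2(b)$ reached through $N(b)\setminus V(B)$, purely from the absence of $C_6$ and $C_7$. Hence witnesses for different vertices of $B$ can be chosen independently and then unioned (Corollary~\ref{independent}), so $B$ is generating iff each $b\in V(B)$ is non-extendable in the \emph{induced} subgraph $G_b$ (Lemma~\ref{generatinglemma}). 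Induced subgraphs remain in $\mathcal{G}(\widehat{C}_6,\widehat{C}_7)$ automatically, so Theorem~\ref{c67ext} applies directly with no augmentation and no case analysis. Your star reduction via $C_4$-freeness is therefore unnecessary, and the decomposition over $V(B)$ is exactly the missing idea that replaces your gadget step.
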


\begin{theorem}
\label{c567gen} \cite{lt:wc456} The following problem can be solved in polynomial time:\\
Input: A graph $G \in \mathcal{G}(\widehat{C}_{5},\widehat{C}_{6},\widehat{C}_{7})$, and
an induced complete bipartite subgraph $B$.\\
Question: Is $B$ generating?
\end{theorem}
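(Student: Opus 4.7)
The plan is to reduce, in polynomial time, the question ``Is $B$ generating?'' to a polynomial number of instances of the relating-edge recognition problem, which by Theorem~\ref{c67related} is decidable in polynomial time on any graph in $\mathcal{G}(\widehat{C}_{6},\widehat{C}_{7})\supseteq\mathcal{G}(\widehat{C}_{5},\widehat{C}_{6},\widehat{C}_{7})$.

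First, I would recast a witness for $B$ as a particular kind of maximal independent set. Write $P_{X}$ for the set of vertices in $N_{1}(B)$ that are adjacent to some vertex of $B_{X}$ but to no vertex of $B_{Y}$, and $P_{Y}$ symmetrically. An independent set $S$ is a witness for $B$ if and only if $S\cap N[B]=\emptyset$ and $S$ dominates $(V\setminus N[B])\cup P_{X}\cup P_{Y}$. The task therefore becomes: does $G[V\setminus N[B]]$ contain a maximal independent set that dominates $P_{X}\cup P_{Y}$?

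Second, I would exploit the forbidden cycles to localize the search. The central structural observation is that, under the absence of $C_{5}$, no vertex $w\in N_{2}(B)$ can simultaneously be adjacent to a vertex of $P_{X}$ and a vertex of $P_{Y}$, since such a configuration closes a $5$-cycle through $B$. The bans on $C_{6}$ and $C_{7}$ analogously bound the number of members of $P_{X}$ (respectively $P_{Y}$), attached to distinct vertices of $B_{X}$ (respectively $B_{Y}$), that a single vertex of $V\setminus N[B]$ can dominate at once. These restrictions cut the candidate witnesses into a structured, polynomially enumerable family.

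Third, I would perform the reduction. Choose representatives $x^{\ast}\in B_{X}$ and $y^{\ast}\in B_{Y}$ and build an auxiliary graph $G^{\ast}$ by deleting $B\setminus\{x^{\ast},y^{\ast}\}$ and updating the neighborhoods of $x^{\ast}$ and $y^{\ast}$ so that the domination duties of $B_{X}$ and $B_{Y}$ are transferred to $x^{\ast}$ and $y^{\ast}$ respectively. The aim is to make $G^{\ast}$ so that ``$B$ is generating in $G$'' is equivalent to ``$x^{\ast}y^{\ast}$ is a relating edge in $G^{\ast}$'', after which Theorem~\ref{c67related} finishes the job. The main obstacle I expect is verifying that $G^{\ast}\in\mathcal{G}(\widehat{C}_{6},\widehat{C}_{7})$: when $|B_{X}|,|B_{Y}|\geq 2$ the original graph already contains several $C_{4}$'s, and the modifications around $x^{\ast}$, $y^{\ast}$ may, without care, create short cycles running through $P_{X}\cup P_{Y}$ or through $V\setminus N[B]$. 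Confirming, by case analysis on how these sets attach to $B$ under the hypothesis $G\in\mathcal{G}(\widehat{C}_{5},\widehat{C}_{6},\widehat{C}_{7})$, that no forbidden $6$- or $7$-cycle appears in $G^{\ast}$ will be the delicate step.
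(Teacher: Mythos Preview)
In this paper Theorem~\ref{c567gen} is quoted from \cite{lt:wc456} and then obtained as a special case of the more general Theorem~\ref{c67gen}. The argument in Section~4 does not go through relating edges at all: for each $b\in V(B)$ one passes to the induced subgraph $G_{b}=G[(N_{2}(b)\cap N_{2}(V(B)))\cup(N(b)\cap N(V(B)))\cup\{b\}]$, shows (Lemma~\ref{disjoint} and Corollary~\ref{independent}) that witnesses for distinct $b$'s are automatically compatible, and concludes (Lemma~\ref{generatinglemma}) that $B$ is generating if and only if every $b$ is non-extendable in its $G_{b}$. Each such test is handled by Algorithm~\ref{alghalfrelating}. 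Because every $G_{b}$ is an induced subgraph of $G$, the forbidden-cycle hypotheses are inherited for free, and this is what makes the reduction go through.

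Your route is different, and the step you yourself label ``delicate'' is where it breaks. Collapsing $B$ to a single edge and transferring the neighbours of $B_{X}$ to $x^{\ast}$ can create a $C_{6}$ in $G^{\ast}$ even when $G\in\mathcal{G}(\widehat{C}_{5},\widehat{C}_{6},\widehat{C}_{7})$. Concretely, take $B_{X}=\{x_{1},x_{2}\}$, $B_{Y}=\{y\}$, vertices $p_{i}\in P_{X}\cap N(x_{i})$ for $i=1,2$, and a path $p_{1}\,a\,b\,c\,p_{2}$ with $a,b,c\notin N[B]$ and no further edges. This $G$ is a single $8$-cycle $y\,x_{1}\,p_{1}\,a\,b\,c\,p_{2}\,x_{2}$, hence lies in $\mathcal{G}(\widehat{C}_{5},\widehat{C}_{6},\widehat{C}_{7})$; but once $x_{1},x_{2}$ are identified to $x^{\ast}$ (now adjacent to both $p_{1}$ and $p_{2}$), the cycle $x^{\ast}\,p_{1}\,a\,b\,c\,p_{2}$ has length~$6$, so Theorem~\ref{c67related} is no longer applicable to $G^{\ast}$. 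Your structural observations in the second step do not exclude this configuration, since no single vertex of $V\setminus N[B]$ dominates both $p_{1}$ and $p_{2}$. The paper's approach sidesteps the whole issue by never modifying the graph, working only with induced subgraphs.
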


Theorem \ref{c467gen} and Theorem \ref{c567gen} are instances of Theorem \ref{c67gen}, which is the main result of Section 4.

\begin{theorem}
\label{c67gen} The following problem can be solved in polynomial time:\\
Input: A graph $G \in \mathcal{G}(\widehat{C}_{6},\widehat{C}_{7})$, and
an induced complete bipartite subgraph $B$.\\
Question: Is $B$ generating?
\end{theorem}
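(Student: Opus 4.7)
The plan is to build on Theorems \ref{c67ext} and \ref{c67related}. The first step is an equivalence reformulation: setting $P = V \setminus N[B_X \cup B_Y]$, $A = N(B_X) \setminus N[B_Y]$, and $C = N(B_Y) \setminus N[B_X]$, the subgraph $B$ is generating if and only if there exists an independent set $S \subseteq P$ that is maximal in $G[P]$ and dominates $A \cup C$ in $G$. The conditions on $S$ corresponding to $S \cup B_X$ being a maximal independent set of $G$ translate precisely to $S \cap N(B_X) = \emptyset$ together with $S$ dominating $C \cup (P \setminus S)$, and symmetrically for $S \cup B_Y$.

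The base case $|B_X| = |B_Y| = 1$ coincides with the relating edge problem, which is settled by Theorem \ref{c67related}. For the general case, the approach is a reduction to this base case via contraction: form $G^*$ by replacing $B_X$ and $B_Y$ with single vertices $x^*$ and $y^*$ inheriting the respective neighborhoods. A direct check shows that any witness for $B$ being generating in $G$ is a witness for $x^* y^*$ being relating in $G^*$, and conversely; indeed the set of admissible witnesses on either side is exactly the set of independent subsets of $P$ satisfying the domination conditions above.

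The main obstacle is that the contraction can create cycles of length $6$ or $7$ in $G^*$ even though $G$ has none, obstructing a direct application of Theorem \ref{c67related}. Any new $C_6$ or $C_7$ through $x^*$ in $G^*$ corresponds, in $G$, to a path of length $5$ or $6$ between two distinct vertices of $N(B_X) \setminus B_Y$ that avoids $B_X$, and symmetrically through $y^*$. The hypothesis $G \in \mathcal{G}(\widehat{C}_6, \widehat{C}_7)$ forces such paths to interact with $B$ in restrictive ways, yielding only a bounded list of configurations to handle. For each configuration, the plan is to either identify a vertex of $P$ forced into (or out of) any witness $S$, thereby reducing to a strictly smaller instance, or to invoke Theorem \ref{c67ext} on a suitably modified subgraph to decide extendability of the forced vertex.

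The proof will thus consist of the equivalence lemma, the contraction reduction, and a case analysis of the path configurations that force a new $C_6$ or $C_7$ in $G^*$. The hardest part will be this casework and the verification that the induced local fixes combine globally into a consistent witness, so that the overall algorithm issues only polynomially many calls to Theorems \ref{c67ext} and \ref{c67related}.
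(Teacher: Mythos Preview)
Your reformulation and the contraction idea are natural, but the proposal stalls exactly at the step you flag as ``the hardest part'': you assert that new $C_6$'s and $C_7$'s through $x^*$ or $y^*$ in $G^*$ come in ``only a bounded list of configurations'' that can each be neutralized by forcing a vertex in or out of $S$, yet you give no argument for either claim. Concretely, a $C_6$ through $x^*$ in $G^*$ unfolds in $G$ to a path $u_1,v_1,v_2,v_3,u_2$ with $u_1\in N(b_1)$ and $u_2\in N(b_2)$ for possibly distinct $b_1,b_2\in B_X$; closing via $b_1,y,b_2$ with $y\in B_Y$ yields a cycle of length $8$ in $G$, which is not forbidden. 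So such configurations are genuinely allowed by the hypothesis, there is no a priori bound on how many occur, and it is not clear that ``forcing a vertex'' handles them or that the resulting cascade of reductions terminates in polynomial time. As written, this is a plan for the easy part and a promissory note for the hard part.

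The paper bypasses the contraction difficulty entirely by decomposing per vertex of $B$ rather than per side. For each $b\in V(B)$ it passes to the induced subgraph $G_b=G[(N_2(b)\cap N_2(V(B)))\cup(N(b)\cap N(V(B)))\cup\{b\}]$, which, being induced, is automatically in $\mathcal{G}(\widehat{C}_6,\widehat{C}_7)$; hence Theorem~\ref{c67ext} applies to $G_b$ with no modification. The only global step is to show that the local witnesses $S_b\subseteq N_2(b)\cap N_2(V(B))$ can be merged into a single independent set, and this is exactly what the $C_6$/$C_7$-free hypothesis gives: for distinct $a,b\in V(B)$ and $a''\in S_a$, $b''\in S_b$, an edge $a''b''$ would close a $6$- or $7$-cycle through $B$. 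Thus $B$ is generating if and only if every $b\in V(B)$ is non-extendable in $G_b$, and the algorithm is $|V(B)|$ calls to Algorithm~\ref{alghalfrelating}. This route never leaves the class $\mathcal{G}(\widehat{C}_6,\widehat{C}_7)$ and has no case analysis; your contraction route would need substantial new work precisely because it does leave the class.
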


A relating edge is a restricted case of a genereting subgraph. However, the complexity of the algorithm for recognizing related edges, presented in Section 3, is $O\left(  \left\vert V\right\vert \left(  \left\vert V\right\vert +\left\vert E\right\vert \right)  \right)  $, while the complexity of the algorithm which recognizes generating subgraphs in Section 4 is $O\left(  \left\vert V\right\vert ^{2} \left(  \left\vert V\right\vert +\left\vert E\right\vert \right)  \right)  $.

\section{Extendable Vertices}
An {\it extendable} vertex is a vertex $v \in V(G)$ such that there does not exist an independent set in $N_{2}(v)$ which dominates $N(v)$. This notion was first introduced in \cite{fhn:wcg5}. If $v$ is not extendable, a {\it witness} for non-extendability is a an independent set $S \subseteq N_{2}(v)$ such that $N(v) \subseteq N[S]$.

The main result of this section is a polynomial time algorithm which solves the following problem:\\
{\it Input:} A graph $G \in \mathcal{G}(\widehat{C}_{6}, \widehat{C}_{7})$ and a vertex $x \in V(G)$.\\
{\it Question:} Is $x$ extendable? \\

\subsection{Graphs Without Cycles of length 6}
In this subsection $G$ is a graph without cycles of length 6, and $x$ is a vertex in the graph, i.e. $G \in \mathcal{G}(\widehat{C}_{6})$ and $x \in V(G)$.

\begin{lemma}
\label{p3endpoints} Let $(a, b, c)$ be a path in $G[N_{2}(x)]$. Then there exists a vertex, $v$, such that $\{v\} = N(x) \cap N(a) = N(x) \cap N(c)$.
\end{lemma}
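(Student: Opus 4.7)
The plan is to derive both conclusions (nonemptiness, equality, and singletonness) from the single forbidden-cycle hypothesis $G\in\mathcal{G}(\widehat{C}_{6})$. Since $a,c\in N_{2}(x)$, each of the sets $N(x)\cap N(a)$ and $N(x)\cap N(c)$ is nonempty; the work is to show that any vertex from the first equals any vertex from the second, and that each is a singleton.

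First I would pick any $u\in N(x)\cap N(a)$ and any $v\in N(x)\cap N(c)$ and show that $u=v$. Assume to the contrary that $u\neq v$. I claim that $x,u,a,b,c,v$ are six pairwise distinct vertices: $x$ differs from each of $u,v$ (they lie in $N(x)$, so are distinct from $x$) and from $a,b,c$ (which lie in $N_{2}(x)$); the vertices $u,v$ differ from $a,b,c$ because $N(x)$ and $N_{2}(x)$ are disjoint by definition; $a,b,c$ are distinct because $(a,b,c)$ is a path; and $u\neq v$ by assumption. The edges $xu$, $ua$, $ab$, $bc$, $cv$, $vx$ all exist by construction, so $x,u,a,b,c,v,x$ is a $C_{6}$ in $G$, contradicting $G\in\mathcal{G}(\widehat{C}_{6})$. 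Hence $u=v$, which already proves that $N(x)\cap N(a)$ and $N(x)\cap N(c)$ share a common vertex.

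Next I would upgrade this to equality and singletonness. Since any $u\in N(x)\cap N(a)$ must coincide with any $v\in N(x)\cap N(c)$, if there were two distinct vertices $u_{1},u_{2}\in N(x)\cap N(a)$, then fixing any $v\in N(x)\cap N(c)$ would force $u_{1}=v=u_{2}$, a contradiction; so $|N(x)\cap N(a)|=1$, and symmetrically $|N(x)\cap N(c)|=1$. Writing $N(x)\cap N(a)=\{u\}$ and $N(x)\cap N(c)=\{v\}$, the initial step forces $u=v$, yielding the desired vertex.

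The proof is essentially a single cycle-counting observation, so the only real thing to be careful about is the distinctness check for the six vertices $x,u,a,b,c,v$; all six pairwise inequalities follow cleanly from the partition of $V(G)$ into $\{x\}$, $N(x)$, $N_{2}(x)$, and the farther shells. No appeal to the absence of $C_{7}$ is needed for this lemma.
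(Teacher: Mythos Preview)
Your proof is correct and follows essentially the same approach as the paper: assume two distinct vertices $u\in N(x)\cap N(a)$ and $v\in N(x)\cap N(c)$, and exhibit the $6$-cycle $(x,u,a,b,c,v)$ to reach a contradiction. You are simply more explicit than the paper in verifying the pairwise distinctness of the six vertices and in spelling out how the equality-and-singleton conclusion follows from the statement ``any $u$ from the first set equals any $v$ from the second''; the underlying argument is identical.
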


\begin{proof}
Assume on the contrary that Lemma \ref{p3endpoints} does not hold. Then there exist two distinct vertices, $v_{1}$ and $v_{2}$, such that $v_{1} \in N(x) \cap N(a)$ and $v_{2} \in N(x) \cap N(c)$. Therefore, $(x, v_{1}, a, b, c, v_{2})$ is a cycle of length 6, which is a contradiction.
\end{proof}

\begin{lemma}
\label{evenpath} Let $P=(a_{1},...,a_{2k+1})$, $k \geq 1$, be a path (not necessarily simple) in $G[N_{2}(x)]$. Then there exists a vertex, $v$, such that $\{v\} = N(x) \cap N(a_{1}) = N(x) \cap N(a_{2k+1})$.
\end{lemma}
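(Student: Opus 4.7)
The plan is to prove Lemma \ref{evenpath} by induction on $k$, stitching together overlapping applications of Lemma \ref{p3endpoints} along the path. The base case $k=1$ is literally the statement of Lemma \ref{p3endpoints}, so no work is needed there.

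For the inductive step, assume $k \geq 2$ and that the conclusion holds for paths with $2k-1$ vertices in $G[N_2(x)]$. Split the given path at the third vertex: the prefix $(a_1,a_2,a_3)$ is a $3$-vertex path in $G[N_2(x)]$, and the suffix $(a_3,a_4,\dots,a_{2k+1})$ is a path (not necessarily simple) on $2(k-1)+1 = 2k-1$ vertices in $G[N_2(x)]$ with $k-1 \geq 1$. First I would apply Lemma \ref{p3endpoints} to the prefix to obtain a vertex $v_1$ with $\{v_1\} = N(x)\cap N(a_1) = N(x)\cap N(a_3)$. Then I would apply the inductive hypothesis to the suffix to obtain a vertex $v_2$ with $\{v_2\} = N(x)\cap N(a_3) = N(x)\cap N(a_{2k+1})$.

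The punchline is the forced equality $v_1 = v_2$: both are the unique element of the singleton $N(x)\cap N(a_3)$, so $\{v_1\} = \{v_2\}$. Setting $v := v_1 = v_2$ yields $\{v\} = N(x)\cap N(a_1) = N(x)\cap N(a_{2k+1})$, which is what we need.

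I do not anticipate a serious obstacle: the whole argument is a telescoping overlap of three-vertex windows. The one point worth being careful about is the phrase \emph{not necessarily simple}; repeated vertices within the subpaths are harmless because Lemma \ref{p3endpoints} only needs consecutive edges, and the inductive hypothesis is stated for the same class of (possibly non-simple) paths, so both applications go through verbatim. Equivalently, one could phrase the induction on the number $2k$ of edges and walk the window $(a_i,a_{i+1},a_{i+2})$ from $i=1$ to $i=2k-1$ in steps of $2$, each time forcing the common neighbour with $x$ to remain the same vertex $v$; this more explicit formulation makes clear that the uniqueness assertion (the intersection is a singleton) propagates along the entire even-length path.
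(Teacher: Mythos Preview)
Your argument is correct and essentially identical to the paper's own proof: both proceed by induction on $k$, peel off a three-vertex window at one end of the path, apply Lemma~\ref{p3endpoints} to that window, and match the resulting singleton $N(x)\cap N(\cdot)$ with the one given by the inductive hypothesis on the remaining $2k-1$ vertices. The only cosmetic difference is that the paper detaches the window $(a_{2k+1},a_{2k+2},a_{2k+3})$ at the tail rather than $(a_1,a_2,a_3)$ at the head.
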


\begin{proof}
By induction on $k$. If $k = 1$ then Lemma \ref{evenpath} is equivalent to Lemma \ref{p3endpoints}.

Assume by induction that Lemma \ref{evenpath} holds for $k$. We prove that it holds also for $k+1$. Let $P=(a_{1},...,a_{2k+3})$ be a path in $G[N_{2}(x)]$. By the induction hypothesis, there exists a vertex, $v$, such that $\{v\} = N(x) \cap N(a_{1}) = N(x) \cap N(a_{2k+1})$. Considering the subpath $P' = (a_{2k+1},a_{2k+2},a_{2k+3})$, Lemma \ref{p3endpoints} implies that $N(x) \cap N(a_{2k+1}) = N(x) \cap N(a_{2k+3})$. Therefore, $\{v\} = N(x) \cap N(a_{1}) = N(x) \cap N(a_{2k+3})$.
\end{proof}

\begin{lemma}
\label{codd} Let $A$ be a connected component of $G[N_{2}(x)]$ which contains an odd cycle. Then $|N(x) \cap N(V(A))|=1$.
\end{lemma}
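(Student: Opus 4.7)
The plan is to reduce everything to repeated applications of Lemma \ref{evenpath}, exploiting the fact that the existence of an odd cycle in $A$ lets us freely switch the parity of any walk inside $G[N_{2}(x)]$. Since every vertex of $V(A) \subseteq N_{2}(x)$ has at least one neighbor in $N(x)$, the set $N(x) \cap N(V(A))$ is non-empty, and it suffices to show that all vertices of $V(A)$ share a single common neighbor in $N(x)$.

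First I would handle the odd cycle itself. Fix an odd cycle $C = (a_{1}, a_{2}, \ldots, a_{2m+1}, a_{1})$ inside $A$, and pick any two vertices $a_{i}, a_{j}$ on $C$. The two arcs of $C$ between $a_{i}$ and $a_{j}$ have lengths summing to $2m+1$, so exactly one of them has even length; that arc is a walk in $G[N_{2}(x)]$ of the form $(a_{i}, \ldots, a_{j})$ with an odd number of vertices. Applying Lemma \ref{evenpath} to this walk yields a single vertex $v \in N(x)$ with $\{v\} = N(x) \cap N(a_{i}) = N(x) \cap N(a_{j})$. Letting $j$ vary shows that every vertex of $C$ has the same, unique neighbor $v$ in $N(x)$.

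Next I would extend this to the rest of $A$. Let $u \in V(A)$ be arbitrary. Since $A$ is connected, there is a walk $Q$ inside $G[N_{2}(x)]$ from $u$ to some $a_{i} \in V(C)$. If $Q$ has even length (including the trivial case $u = a_{i}$), then either $u = a_{i}$ and we are done, or Lemma \ref{evenpath} applied to $Q$ gives $\{v\} = N(x) \cap N(u) = N(x) \cap N(a_{i})$. If instead $Q$ has odd length, concatenate $Q$ with one full traversal of $C$ starting and ending at $a_{i}$; this produces a walk from $u$ to $a_{i}$ of even length $|Q| + (2m+1) \geq 2$, to which Lemma \ref{evenpath} again applies and yields the same conclusion. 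Hence $N(x) \cap N(u) = \{v\}$ for every $u \in V(A)$, which gives $N(x) \cap N(V(A)) = \{v\}$.

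I do not foresee a real obstacle: once the cycle is used as a parity switch, everything reduces to invoking Lemma \ref{evenpath}. The only minor care is to ensure the walks to which Lemma \ref{evenpath} is applied have at least two edges, which is automatic because going once around the odd cycle already contributes length $\geq 3$.
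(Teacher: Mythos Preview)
Your proof is correct and follows essentially the same approach as the paper: both use the odd cycle as a parity switch to guarantee an even-length walk from a fixed cycle vertex to any vertex of $A$, and then invoke Lemma~\ref{evenpath}. The paper's version is terser---it fixes one vertex $a$ on the odd cycle and asserts directly that an even-length (not necessarily simple) path from $a$ to any $b\in V(A)$ exists---whereas you spell out the parity-switching argument explicitly, but the underlying idea is identical.
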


\begin{proof}
Let $a$ be a vertex belonging to an odd cycle on $A$. For every vertex $b$ in $A$, there exists a path $P_{b}$ in $A$ with even number of edges connecting $a$ and $b$. By Lemma \ref{evenpath}, there exists a vertex, $v$, such that $\{v\} = N(x) \cap N(a) = N(x) \cap N(b)$. Hence, $\{v\} = N(x) \cap N(V(A))$.
\end{proof}

\begin{lemma}
\label{cbipartite6} Let $A$ be a bipartite connected component of $G[N_{2}(x)]$ with vertex sets of bipartition $V_{1}$ and $V_{2}$. Then for each $1 \leq i \leq 2$, if $|V_{i}| \geq 2$ then $|N(x) \cap N(V_{i})|=1$.
\end{lemma}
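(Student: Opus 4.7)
The plan is to leverage the standard fact that in a connected bipartite graph, two vertices lie in the same bipartition class if and only if they are joined by a walk of even positive length. Fix $i \in \{1,2\}$ with $|V_i| \geq 2$, and pick any two distinct vertices $a, b \in V_i$. Since $A$ is a connected bipartite graph with bipartition $V_1, V_2$, there is a walk in $A$ from $a$ to $b$ of even length $2k \geq 2$; writing it as $(a_1, \ldots, a_{2k+1})$ with $a_1 = a$ and $a_{2k+1} = b$, this is exactly the type of object handled by Lemma \ref{evenpath}, whose statement explicitly allows non-simple paths.

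Applying Lemma \ref{evenpath} to this walk yields a vertex $v$ with $\{v\} = N(x) \cap N(a) = N(x) \cap N(b)$. Next, for an arbitrary third vertex $u \in V_i$, the same argument applied to an even-length walk from $a$ to $u$ in $A$ (trivial if $u = a$) gives $N(x) \cap N(u) = N(x) \cap N(a) = \{v\}$. Taking the union over $u \in V_i$ then yields $N(x) \cap N(V_i) = \{v\}$, which is the desired conclusion $|N(x) \cap N(V_i)| = 1$.

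The main step is the reduction to Lemma \ref{evenpath} via the parity property of walks in bipartite graphs, and I do not anticipate any real obstacle once that reduction is set up. One minor point worth noting: every vertex of $V_i \subseteq N_2(x)$ has at least one neighbor in $N(x)$ by definition of $N_2(x)$, so $N(x) \cap N(V_i)$ is automatically nonempty; but this is in any case already guaranteed by Lemma \ref{evenpath}'s assertion that $N(x) \cap N(a_1)$ is the singleton $\{v\}$.
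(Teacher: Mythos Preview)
Your proof is correct and follows essentially the same approach as the paper: both fix $i$, take two vertices in $V_i$, use bipartiteness and connectedness of $A$ to obtain an even-length path (walk) between them, and then invoke Lemma~\ref{evenpath} to conclude that their $N(x)$-neighborhoods coincide as a singleton. Your version is slightly more explicit about why the single vertex $v$ works for all of $V_i$, but the argument is the same.
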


\begin{proof}
Let $1 \leq i \leq 2$ and $a \in V_{i}$. For every vertex $a \neq a' \in V_{i}$, there exists a path in $A$ with even number of edges connecting $a$ and $a'$. By Lemma \ref{evenpath}, there exists a vertex, $v$, such that $\{v\} = N(x) \cap N(a) = N(x) \cap N(a')$. Therefore, $\{v\} = N(x) \cap N(V_{i})$.
\end{proof}

\begin{lemma}
\label{cbigbipartite6} Let $A$ be a bipartite connected component of $G[N_{2}(x)]$ with vertex sets of bipartition $V_{1}$ and $V_{2}$, such that $min(|V_{1}|, |V_{2}|) \geq 2$. Then $|N(x) \cap N(V(A))|=1$.
\end{lemma}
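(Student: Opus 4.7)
The plan is to apply Lemma \ref{cbipartite6} to each of $V_1$ and $V_2$: the size hypothesis is satisfied for both, so it yields unique vertices $v_1, v_2 \in N(x)$ with $N(x) \cap N(V_i) = \{v_i\}$ for $i=1,2$. Because every vertex of $V_i \subseteq N_2(x)$ must have at least one neighbor in $N(x)$, and the only admissible candidate is $v_i$, we actually get that $v_i$ is adjacent to \emph{every} vertex of $V_i$. Since $N(x) \cap N(V(A)) = \{v_1, v_2\}$, the claim reduces to proving $v_1 = v_2$.

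I would argue by contradiction: assume $v_1 \neq v_2$ and exhibit a $6$-cycle in $G$. The key ingredient is a matching of size $2$ inside $A$. Such a matching exists because $A$ is connected, bipartite and satisfies $\min(|V_1|,|V_2|) \geq 2$, which forces $A$ to be something other than a star $K_{1,n}$; a longest path in $A$ then has length at least $3$, and its first and last edges are disjoint. Writing this path as $(a, b, a', b')$ with $a, a' \in V_1$ and $b, b' \in V_2$, the closed walk $(v_1, a, b, v_2, b', a', v_1)$ traverses six edges all of which lie in $G$: $v_1 a$ and $v_1 a'$ since $v_1$ is adjacent to every vertex of $V_1$, $v_2 b$ and $v_2 b'$ since $v_2$ is adjacent to every vertex of $V_2$, and $ab, a'b'$ from the matching inside $A$.

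To finish, I verify the six vertices are distinct: $v_1 \neq v_2$ by assumption; $\{v_1, v_2\} \subseteq N(x)$ while $\{a, a', b, b'\} \subseteq N_2(x)$, so no $v_i$ equals any of the $a$'s or $b$'s; $a \neq a'$ and $b \neq b'$ by construction of the path; and the $a$'s are separated from the $b$'s by the bipartition of $A$. The resulting subgraph is therefore a genuine $C_6$ in $G$, contradicting $G \in \mathcal{G}(\widehat{C}_6)$. The only point where real care is needed is in guaranteeing the matching of size $2$ (equivalently, the induced $P_4$) inside $A$, which is exactly what the hypothesis $\min(|V_1|,|V_2|) \geq 2$ is designed to provide by ruling out the star case; the rest of the argument is routine bookkeeping.
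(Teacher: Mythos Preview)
Your proof is correct and follows essentially the same approach as the paper: apply Lemma~\ref{cbipartite6} to obtain $v_1,v_2$, assume $v_1\neq v_2$, extract a $P_4$ from $A$ using the hypothesis $\min(|V_1|,|V_2|)\geq 2$, and build a $C_6$ from the two outer edges of that $P_4$ together with the edges to $v_1$ and $v_2$. The paper's cycle $(a_1,a_2,v_2,a_4,a_3,v_1)$ is exactly your $(v_1,a,b,v_2,b',a')$ under the relabeling $a_1\!=\!a$, $a_2\!=\!b$, $a_3\!=\!a'$, $a_4\!=\!b'$; you simply supply the distinctness checks and the justification for the existence of the $P_4$ that the paper leaves implicit.
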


\begin{proof}
$A$ contains a path $(a_{1}, a_{2}, a_{3}, a_{4})$, where $a_{1} \in V_{1}$ and $a_{4} \in V_{2}$. By Lemma \ref{cbipartite6}, for each $1 \leq i \leq 2$ there exists a vertex $v_{i}$, such that $\{v_{i}\} = N(x) \cap N(V_{i})$. Assume on the contrary that $v_{1} \neq v_{2}$. The cycle $(a_{1}, a_{2}, v_{2}, a_{4}, a_{3}, v_{1})$ is of length 6, which is a contradiction. Therefore, $v_{1} = v_{2}$, and $|N(x) \cap N(V(A))| = 1$.
\end{proof}

\begin{corollary}
\label{n2types} Let $A$ be a connected component of $G[N_{2}(x)]$. Then at least one of the following options holds. (See Fig. \ref{fign2types}.)
\begin{enumerate}

\item $|V(A)| = 1$.

\item $|N(x) \cap N(V(A))| = 1$.

\item $A$ is $K_{1,r}$, for $r \geq 1$. 

\end{enumerate}
\end{corollary}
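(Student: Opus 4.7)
My plan is to do a direct case analysis on the structure of the connected component $A$, pulling the three options out of the preceding lemmas. The three cases will be: (i) $A$ is trivial, (ii) $A$ contains an odd cycle, (iii) $A$ is bipartite (and nontrivial), further split by the sizes of the two sides of the bipartition.

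First I would dispose of the trivial case: if $|V(A)|=1$, option~1 of the corollary holds and there is nothing to prove. Next, suppose $|V(A)| \geq 2$. If $A$ contains an odd cycle, Lemma~\ref{codd} immediately gives $|N(x) \cap N(V(A))| = 1$, which is option~2. So from here on I may assume $A$ is bipartite with bipartition $V_1 \cup V_2$ and $|V(A)| \geq 2$.

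Within the bipartite case I split according to $\min(|V_1|,|V_2|)$. If $\min(|V_1|,|V_2|) \geq 2$, then Lemma~\ref{cbigbipartite6} applies directly and yields option~2. Otherwise, one side is a singleton; without loss of generality $V_1 = \{v\}$. Since $A$ is connected and bipartite with sides $\{v\}$ and $V_2$, every vertex of $V_2$ must be adjacent to $v$ (otherwise it would be disconnected from the rest of $A$), and $V_2$ is an independent set by bipartiteness. Hence $A \cong K_{1,|V_2|}$ with $|V_2| = r \geq 1$, which is option~3.

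There is no real obstacle here; the content of the corollary is essentially just an organized restatement of Lemmas~\ref{codd}, \ref{cbipartite6}, and~\ref{cbigbipartite6}. The only small point to be careful about is the final subcase, where one has to observe that a connected bipartite graph with one part of size $1$ is necessarily a star, so that option~3 is genuinely delivered (rather than some weaker structural statement).
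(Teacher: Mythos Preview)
Your proof is correct and follows essentially the same approach as the paper: both invoke Lemma~\ref{codd} to handle the non-bipartite case and Lemma~\ref{cbigbipartite6} to force one side of the bipartition to be a singleton, leaving $K_{1,r}$ as the only remaining possibility. The paper phrases it as ``assume options~1 and~2 fail, deduce option~3'' while you do a forward case split, but the content is identical.
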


\begin{proof}
Assume that the first 2 options of Corollary \ref{n2types} do not hold for $A$. By Lemma \ref{codd}, $A$ is bipartite. By Lemma \ref{cbigbipartite6}, at least one of the vertex sets of bipartition of $A$ contains only one vertex. The third option of Corollary \ref{n2types} holds.
\end{proof}

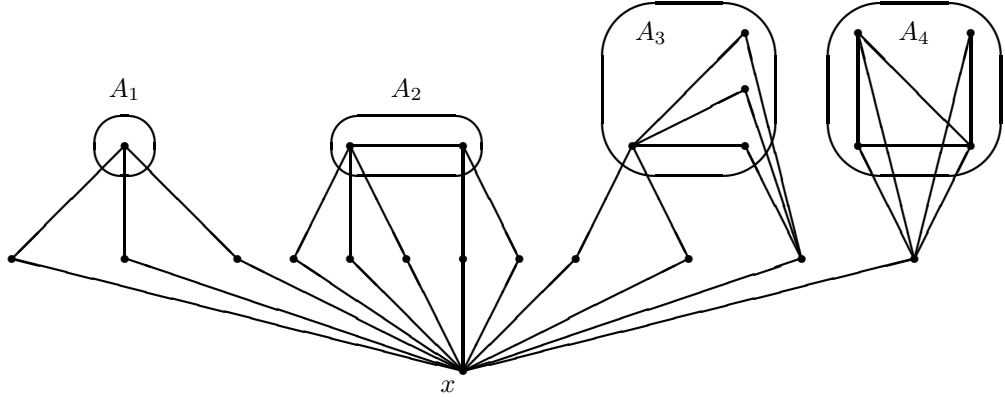
\begin{figure}[ht]
\setlength{\unitlength}{1.0cm} \begin{picture}(15,6)\thicklines
\put(7,0.5){\circle*{0.1}}
\put(2.5,3.5){\circle*{0.1}}
\put(12.25,3.5){\circle*{0.1}}
\put(13.75,3.5){\circle*{0.1}}
\put(12.25,5){\circle*{0.1}}
\put(13.75,5){\circle*{0.1}}
\put(9.25,3.5){\circle*{0.1}}
\put(10.75,3.5){\circle*{0.1}}
\put(10.75,4.25){\circle*{0.1}}
\put(10.75,5){\circle*{0.1}}
\multiput(1,2)(1.5,0){9}{\circle*{0.1}}
\multiput(5.5,3.5)(1.5,0){2}{\circle*{0.1}}
\put(6.8,0.3){\makebox(0,0){$x$}}
\put(7,0.5){\line(0,1){1.5}}
\put(7,0.5){\line(-4,1){6}}
\put(7,0.5){\line(-3,1){4.5}}
\put(7,0.5){\line(-2,1){3}}
\put(7,0.5){\line(-1,1){1.5}}
\put(7,0.5){\line(1,1){1.5}}
\put(7,0.5){\line(2,1){3}}
\put(7,0.5){\line(3,1){4.5}}
\put(7,0.5){\line(4,1){6}}
\put(7,0.5){\line(-1,2){0.75}}
\put(7,0.5){\line(1,2){0.75}}
\put(7,0.5){\line(-3,2){2.25}}
\put(2.5,3.5){\line(-1,-1){1.5}}
\put(2.5,3.5){\line(0,-1){1.5}}
\put(2.5,3.5){\line(1,-1){1.5}}
\put(12.25,3.5){\line(0,1){1.5}}
\put(12.25,3.5){\line(1,0){1.5}}
\put(13.75,3.5){\line(0,1){1.5}}
\put(13.75,3.5){\line(-1,1){1.5}}
\put(13,2){\line(-1,2){0.75}}
\put(13,2){\line(1,2){0.75}}
\put(13,2){\line(-1,4){0.75}}
\put(13,2){\line(1,4){0.75}}
\put(9.25,3.5){\line(1,0){1.5}}
\put(9.25,3.5){\line(2,1){1.5}}
\put(9.25,3.5){\line(1,1){1.5}}
\put(9.25,3.5){\line(-1,-2){0.75}}
\put(9.25,3.5){\line(1,-2){0.75}}
\put(11.5,2){\line(-1,2){0.75}}
\put(11.5,2){\line(-1,4){0.75}}
\put(11.5,2){\line(-1,3){0.75}}
\put(5.5,3.5){\line(1,0){1.5}}
\put(5.5,3.5){\line(0,-1){1.5}}
\put(5.5,3.5){\line(-1,-2){0.75}}
\put(5.5,3.5){\line(1,-2){0.75}}
\put(7,3.5){\line(0,-1){1.5}}
\put(7,3.5){\line(1,-2){0.75}}
\multiput(4.75,2)(1.5,0){3}{\circle*{0.1}}
\put(2.5,3.5){\oval(0.8,0.8)}
\put(6.25,3.5){\oval(2,0.8)}
\put(10,4.25){\oval(2.3,2.3)}
\put(13,4.25){\oval(2.3,2.3)}
\put(2.5,4.25){\makebox(0,0){$A_{1}$}}
\put(6.25,4.25){\makebox(0,0){$A_{2}$}}
\put(9.5,5){\makebox(0,0){$A_{3}$}}
\put(13,5){\makebox(0,0){$A_{4}$}}
\end{picture}
\caption{Distinct types of connected components of $N_{2}(x)$. $A_{1}$ contains only one vertex. $A_{4}$ contains an odd cycle, and therefore dominates only one vertex of $N(x)$. $A_{3}$ is bipartite with vertex sets of bipartition $V_{1}$ and $V_{2}$, $|V_{1}| = 1$, $|V_{2}| > 1$, and $V_{2}$ dominates only one vertex of $N(x)$. $A_{2}$ is $K_{1,1}$. }
\label{fign2types}
\end{figure}

Let $A^{*}$ be the set of all connected components $A$ of $G[N_{2}(x)]$ such that there exists a vertex $a \in V(A)$ for which $N(x) \cap N(a) = N(x) \cap N(V(A))$. For example, in Fig. \ref{fign2types}, $A_{1}$ and $A_{4}$ belong to $A^{*}$, while $A_{2}$ and $A_{3}$ do not. Define $H_{x} = G[N_{2}[x] \setminus N[V(A^{*})] ]$. Since $G \in \mathcal{G}(\widehat{C}_{6})$ and $H_{x}$ is an induced subgraph of $G$, also $H_{x} \in \mathcal{G}(\widehat{C}_{6})$.

\begin{corollary}
\label{gxystar} In the graph $H_{x}$ every connected component of $G[N_{2}(x)]$ is $K_{1,r}$ for $r \geq 1$. 
\end{corollary}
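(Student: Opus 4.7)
The plan is to combine Corollary \ref{n2types} with the definition of $A^{*}$ by eliminating the first two options of that corollary for any component that survives in $H_{x}$.

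First I would check that the connected components of $G[N_{2}(x)]$ appearing inside $H_{x}$ are exactly the components $A$ of $G[N_{2}(x)]$ with $A \notin A^{*}$. If $A \notin A^{*}$, then $V(A) \cap V(A^{*}) = \emptyset$ (distinct components), and no vertex of $V(A)$ can be adjacent to any vertex of $V(A^{*})$, since such an edge would lie inside $G[N_{2}(x)]$ and would force $A$ into the same component as some element of $A^{*}$. Therefore $V(A) \cap N[V(A^{*})] = \emptyset$, so $V(A) \subseteq V(H_{x})$ and the induced subgraph on $V(A)$ is preserved intact in $H_{x}$.

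Next I would rule out options (1) and (2) of Corollary \ref{n2types} for such an $A$. Option (1), $|V(A)|=1$ with $V(A)=\{a\}$, trivially satisfies $N(x) \cap N(a) = N(x) \cap N(V(A))$, placing $A$ in $A^{*}$. Option (2), $N(x) \cap N(V(A)) = \{v\}$, also forces $A \in A^{*}$: every $a \in V(A) \subseteq N_{2}(x)$ must have at least one neighbor in $N(x)$, and any such neighbor lies in $N(x) \cap N(V(A)) = \{v\}$, so $N(x) \cap N(a) = \{v\} = N(x) \cap N(V(A))$ for every $a \in V(A)$. Since both options contradict $A \notin A^{*}$, only option (3) remains, giving $A \cong K_{1,r}$ for some $r \geq 1$.

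The main obstacle, such as it is, is purely a bookkeeping check: verifying that the deletion of $N[V(A^{*})]$ from $N_{2}[x]$ does not partially erase or split any component $A$ outside $A^{*}$. Once that verification is in place, Corollary \ref{n2types} does the rest by leaving the $K_{1,r}$ shape as the only possibility.
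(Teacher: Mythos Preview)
Your argument is correct and is exactly the intended one: the paper's own proof is the one-liner ``Follows immediately from the construction of $H_{x}$ and Corollary \ref{n2types},'' and you have simply spelled out that immediacy by showing that options (1) and (2) of Corollary \ref{n2types} each force $A\in A^{*}$, while components outside $A^{*}$ survive intact in $H_{x}$. There is nothing to add or change.
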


\begin{proof}
Follows immediately from the construction of $H_{x}$ and Corollary \ref{n2types}.
\end{proof}

\begin{lemma}
\label{omit} The following conditions are equivalent.
\begin{enumerate}
\item x is not extendable in G.
\item x is not extendable in $H_{x}$.
\end{enumerate}
\end{lemma}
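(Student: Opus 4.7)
The plan is to prove the equivalence by a direct back-and-forth construction of witnesses, exploiting the defining property of $A^{*}$: each component $A \in A^{*}$ has a distinguished vertex $a_{A} \in V(A)$ with $N(x) \cap N(a_{A}) = N(x) \cap N(V(A))$, so a single vertex of $A$ already captures the entire $N(x)$-domination reach of the whole component. A small but essential structural remark, on which everything rests, is that $V(A^{*})$ is a union of whole connected components of $G[N_{2}(x)]$, so no edge of $G$ joins $V(A^{*})$ to $N_{2}(x) \setminus V(A^{*})$; consequently $N_{2}(x) \cap V(H_{x}) = N_{2}(x) \setminus V(A^{*})$.

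For $(1)\Rightarrow(2)$, I would start with an independent $S \subseteq N_{2}(x)$ dominating $N(x)$ in $G$ and set $T := S \setminus V(A^{*})$. To see that $T$ still dominates $N(x) \cap V(H_{x}) = N(x) \setminus N[V(A^{*})]$, take any $u$ in this set together with a witnessing neighbor $v \in S$; if $v$ lay in some $V(A)$ with $A \in A^{*}$, then $u \in N(V(A)) \cap N(x) \subseteq N[V(A^{*})]$, contradicting the choice of $u$. Discarding from $T$ the (useless) vertices with no neighbor in $N(x) \cap V(H_{x})$ gives an independent set sitting inside $N_{2}(x)$ of $H_{x}$ that witnesses non-extendability in $H_{x}$.

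For $(2)\Rightarrow(1)$, I would start with an independent witness $S' \subseteq N_{2}(x) \cap V(H_{x})$ for non-extendability of $x$ in $H_{x}$, pick the distinguished vertex $a_{A}$ for every $A \in A^{*}$, and form $S := S' \cup \{a_{A} : A \in A^{*}\}$. Independence is immediate: the vertices $a_{A}$ lie in pairwise distinct components of $G[N_{2}(x)]$, and none of them is adjacent to any $s \in S' \subseteq N_{2}(x) \setminus V(A^{*})$ by the structural remark above. For domination, split
\[
N(x) \;=\; \bigl(N(x) \setminus N[V(A^{*})]\bigr) \;\cup\; \bigcup_{A \in A^{*}} \bigl(N(x) \cap N(V(A))\bigr);
\]
the first piece is dominated by $S'$ because $H_{x}$ is an induced subgraph of $G$, and each summand in the union is dominated by the corresponding $a_{A}$ by the defining property of $A^{*}$.

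The main obstacle, as far as I can see, is pure bookkeeping: ensuring that the $N_{2}$-neighborhoods and the notion of domination are interpreted consistently when passing between $G$ and $H_{x}$, since deleting $N[V(A^{*})]$ could in principle shrink $N_{2}(x)$. The observation above that components in $A^{*}$ are wholly detached from the remaining components of $G[N_{2}(x)]$, together with the fact that any vertex genuinely contributing to the domination of a surviving $u$ is itself still at distance exactly $2$ from $x$ in $H_{x}$, makes this step entirely formal.
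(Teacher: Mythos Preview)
Your proof is correct and follows exactly the same approach as the paper: restrict a witness in $G$ to $V(H_{x})$ for $(1)\Rightarrow(2)$, and enlarge a witness in $H_{x}$ by the distinguished vertices $a_{A}$ (the paper's $v_{A}$) for $(2)\Rightarrow(1)$. In fact you are more careful than the paper's terse argument --- your structural remark that $N_{2}(x)\cap V(H_{x}) = N_{2}(x)\setminus V(A^{*})$, and your discarding of ``useless'' vertices so that the resulting set genuinely lies in $N_{2}(x)$ of $H_{x}$, fill in details the paper leaves implicit.
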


\begin{proof}

$1 \implies 2$: Let $S \subseteq V(G)$ be an independent set in $N_{2}(x)$ which dominates $N(x)$. Let $S'$ be the set of all vertices of $S$ which exist also in the graph $H_{x}$, i.e. $S' = S \cap V(H_{x})$. The set $S'$ is a witness that $x$ is not extendable in $H_{x}$.

$2 \implies 1$: Let $S' \subseteq V(H_{x})$ be a witness that $x$ is not extendable in $H_{x}$.
For every component $A$ of $A^{*}$, choose a vertex $v_{A} \in V(A)$ such that $N(x) \cap N(v_{A}) = N(x) \cap N(V(A))$. Define $S = S' \cup \{v_{A} | A \in A^{*} \}$. Then $S$ is a witness that $x$ is not extendable in $G$.
\end{proof}

\begin{lemma}
\label{gxy2dom} Let $A$ be a connected component of $G[N_{2}(x)]$ in $H_{x}$, and $a$, $b$ two adjacent vertices in $A$. Then the following conditions hold.
\begin{enumerate}
\item $N(x) \cap N(V(A)) = N(x) \cap N(\{a,b\})$
\item $( N(a) \setminus N(b) ) \cap N(x) \neq \emptyset$
\item $( N(b) \setminus N(a) ) \cap N(x) \neq \emptyset$
\item $N(a) \cap N(b) \cap N(x) = \emptyset$
\end{enumerate}
\end{lemma}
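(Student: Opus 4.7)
The plan is to use the structure from Corollary \ref{gxystar}, which tells us $A$ is a star $K_{1,r}$ with $r \geq 1$, together with the defining property of $H_x$: since $A$ survives into $H_x$ we have $A \notin A^{*}$, so no single vertex $v \in V(A)$ realizes $N(x) \cap N(v) = N(x) \cap N(V(A))$. This will drive (2) and (3), while (1) follows from the star structure, and (4) will reduce to finding a forbidden 6-cycle.

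For (1) I would dispose of the trivial case $r = 1$ (where $V(A) = \{a,b\}$), and otherwise take $a$ as the center of the star and $b$ as a leaf. For any other leaf $\ell$ of $A$, the triple $(\ell, a, b)$ is a path in $G[N_2(x)]$, and Lemma \ref{p3endpoints} yields a unique $v_{\ell} \in N(x)$ with $\{v_{\ell}\} = N(x) \cap N(\ell) = N(x) \cap N(b)$. So every leaf's $N(x)$-neighborhood is already captured by $b$, giving $N(x) \cap N(V(A)) = (N(x) \cap N(a)) \cup (N(x) \cap N(b))$. Parts (2) and (3) then follow by contraposition from (1) together with $A \notin A^{*}$: if $(N(a) \setminus N(b)) \cap N(x)$ were empty, (1) would force $N(x) \cap N(V(A)) = N(x) \cap N(b)$, letting $b \in V(A)$ witness $A \in A^{*}$, a contradiction; the symmetric argument handles (3).

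The hard part will be (4), since that is where $C_{6}$-freeness actually enters. I would argue by contradiction: assume some $v \in N(a) \cap N(b) \cap N(x)$, and use (2) and (3) to pick $v_a \in (N(a) \setminus N(b)) \cap N(x)$ and $v_b \in (N(b) \setminus N(a)) \cap N(x)$. Then $(x, v_a, a, v, b, v_b)$ should close up into a 6-cycle in $G$: the six required edges come immediately from the memberships defining $v_a, v, v_b$ together with the edge $ab$. The remaining check is pairwise distinctness of the six vertices; this follows from the layering of $G$ around $x$ (distinguishing $x$ from $N(x)$ from $N_2(x)$) and from the asymmetric memberships of $v_a, v, v_b$ in $N(a)$ versus $N(b)$. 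The resulting $C_{6}$ contradicts $G \in \mathcal{G}(\widehat{C}_{6})$ and completes the proof.
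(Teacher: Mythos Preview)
Your proposal is correct and follows essentially the same route as the paper: you establish (1) via Lemma~\ref{p3endpoints} applied to length-2 paths inside the star $A$, derive (2) and (3) from (1) together with $A\notin A^{*}$, and exhibit the same 6-cycle $(x,v_a,a,v,b,v_b)$ for (4). The only cosmetic differences are that you invoke Corollary~\ref{gxystar} explicitly and fix $a$ as the center (which you should mark as ``without loss of generality'' since the statement is symmetric in $a,b$), whereas the paper leaves the star structure implicit; and you spell out the distinctness check for the 6-cycle, which the paper omits.
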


\begin{proof}
If there exists a vertex $z \in V(A) \setminus \{a,b\}$, then there exists a path with two edges connecting $z$ to one of $a$ or $b$. By Lemma \ref{p3endpoints}, $N(x) \cap N(z) = N(x) \cap N(a)$ or $N(x) \cap N(z) = N(x) \cap N(b)$. Therefore, $N(x) \cap N(V(A)) = N(x) \cap N(\{a,b\})$.

If one of Conditions 2 and 3 does not hold, then there exists a vertex $v_{A} \in \{a,b\} \subseteq V(A)$ such that $N(x) \cap N(V(A)) = N(x) \cap N(v_{A})$, which contradicts the construction of $H_{x}$. Therefore, Conditions 2 and 3 hold.

It follows from Conditions 2 and 3 that there exist vertices $v_{a} \in N(x) \cap ( N(a) \setminus N(b) )$ and $v_{b} \in N(x) \cap ( N(b) \setminus N(a) )$. If there existed a vertex $v \in N(a) \cap N(b) \cap N(x)$ then $(x, v_{a}, a, v, b, v_{b})$ was a cycle of length 6, which was a contradiction. Therefore, Condition 4 holds.
\end{proof}

\subsection{Graphs Without Cycles of Lengths 6 and 7}
In this subsection $G \in \mathcal{G}(\widehat{C}_{6}, \widehat{C}_{7})$, $x \in V(G)$,  $H_{x} = G[N_{2}[x] \setminus N[V(A^{*})] ]$, and $A$ is a connected component of $N_{2}(x)$ in $H_{x}$. The vertices $a$ and $b$ are adjacent to each other, and belong to $A$. 

\begin{lemma}
\label{star} If $N(a) \cap N(x) = \{v_{1},...,v_{k}\}$ when $k > 2$, then $N(\{v_{1},...,v_{k}\}) \cap N_{2}(x) = \{a\}$.
\end{lemma}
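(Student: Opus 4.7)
The plan is to argue by contradiction. Suppose some $a'\in N_2(x)$ with $a'\neq a$ is adjacent to one of the $v_i$'s. I will exploit the abundance of common neighbors of $a$ and $x$ (there are more than two), together with a private $N(x)$-neighbor of $b$ supplied by Lemma \ref{gxy2dom}, to build forbidden $C_6$'s and $C_7$'s. Fix a vertex $u_b\in(N(b)\setminus N(a))\cap N(x)$, which exists by Lemma \ref{gxy2dom} and satisfies $u_b\notin\{v_1,\ldots,v_k\}$ since $N(a)\cap N(b)\cap N(x)=\emptyset$.

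I will first narrow down $a'$'s attachments. If $a'$ had two neighbors among the $v_i$'s, say $v_1$ and $v_2$, then $(x,v_3,a,v_1,a',v_2,x)$ is a $C_6$; so after relabelling $a'\sim v_1$ is the only such attachment. An additional neighbor $u$ of $a'$ in $N(x)$ would give the $C_6$ $(x,u,a',v_1,a,v_2,x)$, forcing $N(a')\cap N(x)=\{v_1\}$. Using $u_b$: $a'\sim a$ would yield the $C_6$ $(x,v_1,a',a,b,u_b,x)$, and $a'\sim b$ would yield the $C_7$ $(x,v_2,a,v_1,a',b,u_b,x)$. Hence $a'$ is adjacent to neither $a$ nor $b$, and since $A$ is $K_{1,r}$ by Corollary \ref{gxystar}, $a'$ does not lie in $V(A)$.

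So $a'$ belongs to some other component $A'$ of $G[N_2(x)]$ in $H_x$. The case $A'=\{a'\}$ is impossible, because then $N(x)\cap N(V(A'))=\{v_1\}=N(x)\cap N(a')$, placing $A'$ in $A^*$ and pushing $a'$ out of $V(H_x)$. Hence $a'$ has a neighbor $b'$ in $A'$, and Lemma \ref{gxy2dom} applied to $a',b'$ in $A'$ yields $u_{b'}\in(N(b')\setminus N(a'))\cap N(x)$, necessarily distinct from $v_1$. Now I invoke $k>2$: among the at least two vertices $v_2,\ldots,v_k$, some $v_j$ differs from $u_{b'}$, and $(x,v_j,a,v_1,a',b',u_{b'},x)$ is a $C_7$, contradicting $G\in\mathcal{G}(\widehat{C}_7)$.

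The main obstacle I anticipate is verifying that each constructed cycle has the claimed number of pairwise distinct vertices. The hypothesis $k>2$ is needed precisely for the pigeonhole choice of $v_j\neq u_{b'}$ in the final step, and the condition $N(a)\cap N(b)\cap N(x)=\emptyset$ from Lemma \ref{gxy2dom} (and its analogue for $a',b'$) guarantees that the auxiliary $N(x)$-vertices $u_b,u_{b'}$ are disjoint from $\{v_1,\ldots,v_k\}$.
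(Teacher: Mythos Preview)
Your proof is correct and follows essentially the same route as the paper: assume some $a'\neq a$ lies in $N(\{v_1,\ldots,v_k\})\cap N_2(x)$, locate it in another component $A'$ of $N_2(x)$ in $H_x$, pick a neighbor of $a'$ there, and use $k>2$ to select a suitable $v_j$ and exhibit a $C_7$. The paper is more economical --- it skips your preliminary steps narrowing down $N(a')\cap N(x)$ and ruling out $a'\sim a$, $a'\sim b$ via explicit cycles, instead invoking Lemma~\ref{gxy2dom}(4) directly to get $a'\notin V(A)$ and relying on Corollary~\ref{gxystar} for $|V(A')|\geq 2$ --- but the core contradiction (your final $C_7$ versus the paper's $(x,v,a'',a',v_1,a,v_2)$ or $(x,v_2,a'',a',v_1,a,v_3)$) is identical.
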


\begin{proof}
Corollary \ref{gxystar} implies that $A$ is $K_{1,r}$, for $r \geq 1$. By Lemma \ref{cbipartite6}, $a$ is adjacent to all vertices of $V(A) \setminus \{a\}$. Assume on the contrary that Lemma \ref{star} does not hold. There exists a vertex $a' \in ( N(v_{1}) \cap N_{2}(x) ) \setminus \{a\}$. Lemma \ref{gxy2dom} implies that $a' \not\in V(A)$. Let $A'$ be the connected component of $N_{2}(x)$ in $H_{x}$ which contains $a'$. There exists a vertex $a'' \in V(A') \cap N(a')$.

Lemma \ref{gxy2dom} implies that there exists a vertex $v \in ( N(a'') \setminus N(a') ) \cap N(x)$. If $v \neq v_{2}$ then $(x,v,a'',a',v_{1},a,v_{2})$ is a cycle of length 7. Otherwise, $(x,v_{2},a'',a',v_{1},a,v_{3})$ is a cycle of length 7. (See Fig. \ref{figk1312tmp}.) In both cases we obtained a contradiction. Therefore, Lemma \ref{star} holds.
\end{proof}

\begin{figure}[h]
\setlength{\unitlength}{0.9cm} \begin{picture}(15,4)\thicklines
\put(3.5,0.5){\circle*{0.1}}
\multiput(0.5,2)(1.5,0){5}{\circle*{0.1}}
\multiput(1.25,3.5)(1.5,0){4}{\circle*{0.1}}
\put(1.25,3.8){\makebox(0,0){$b$}}
\put(2.75,3.8){\makebox(0,0){$a$}}
\put(4.25,3.8){\makebox(0,0){$a'$}}
\put(5.75,3.8){\makebox(0,0){$a''$}}
\put(2,3.9){\makebox(0,0){$A$}}
\put(5,3.9){\makebox(0,0){$A'$}}
\put(1.25,3){\circle*{0.1}}
\put(1.25,3){\line(3,1){1.5}}
\put(1.25,3){\line(-3,-4){0.75}}
\put(3.3,0.4){\makebox(0,0){$x$}}
\put(0.3,1.8){\makebox(0,0){$v_{b}$}}
\put(1.8,1.8){\makebox(0,0){$v_{3}$}}
\put(3.3,1.8){\makebox(0,0){$v_{2}$}}
\put(5.1,1.8){\makebox(0,0){$v_{1}$}}
\put(6.6,1.8){\makebox(0,0){$v$}}
\put(1.25,3.5){\line(1,0){1.5}}
\put(4.25,3.5){\line(1,0){1.5}}
\put(3.5,0.5){\line(-2,1){3}}
\put(3.5,0.5){\line(-1,1){1.5}}
\put(3.5,0.5){\line(0,1){1.5}}
\put(3.5,0.5){\line(1,1){1.5}}
\put(3.5,0.5){\line(2,1){3}}
\put(0.5,2){\line(1,2){0.75}}
\put(2,2){\line(1,2){0.75}}
\put(3.5,2){\line(-1,2){0.75}}
\put(5,2){\line(-3,2){2.25}}
\multiput(3.5,2)(0.6,0.4){4}{\line(3,2){0.4}} 
\put(5,2){\line(-1,2){0.75}}
\multiput(6.5,2)(-0.28,0.56){3}{\line(-1,2){0.2}} 
\put(11,0.5){\circle*{0.1}}
\multiput(9.5,2)(1.5,0){4}{\circle*{0.1}}
\multiput(8,3.5)(1.5,0){6}{\circle*{0.1}}
\multiput(8,3.5)(3,0){3}{\line(1,0){1.5}}
\put(8,3.8){\makebox(0,0){$b$}}
\put(9.5,3.8){\makebox(0,0){$a$}}
\put(11,3.8){\makebox(0,0){$a'$}}
\put(12.5,3.8){\makebox(0,0){$a''$}}
\put(14,3.8){\makebox(0,0){$z'$}}
\put(15.5,3.8){\makebox(0,0){$z''$}}
\put(8.75,3.9){\makebox(0,0){$A$}}
\put(11.75,3.9){\makebox(0,0){$A'$}}
\put(14.75,3.9){\makebox(0,0){$A''$}}
\put(10.8,0.4){\makebox(0,0){$x$}}
\put(9.3,1.8){\makebox(0,0){$v_{b}$}}
\put(10.8,1.8){\makebox(0,0){$v_{1}$}}
\put(12.6,1.8){\makebox(0,0){$v_{2}$}}
\put(14.1,1.8){\makebox(0,0){$v$}}
\put(11,0.5){\line(-1,1){1.5}}
\put(11,0.5){\line(0,1){1.5}}
\put(11,0.5){\line(1,1){1.5}}
\put(11,0.5){\line(2,1){3}}
\put(9.5,2){\line(-1,1){1.5}}
\put(11,2){\line(-1,1){1.5}}
\put(12.5,2){\line(-2,1){3}}
\put(11,2){\line(0,1){1.5}}
\put(12.5,2){\line(0,1){1.5}}
\put(11,2){\line(2,1){3}}
\put(12.5,2){\line(2,1){3}}
\multiput(12.5,2)(-0.5,0.5){3}{\line(-1,1){0.4}} 
\multiput(14,2)(-0.5,0.5){3}{\line(-1,1){0.4}} 
\multiput(14,2)(-0.52,0.26){6}{\line(-2,1){0.4}} 
\end{picture}
\caption{Proofs of Lemmas \ref{star} (left) and \ref{star2} (right). The dashed edges are not in the graph.}
\label{figk1312tmp}
\end{figure}
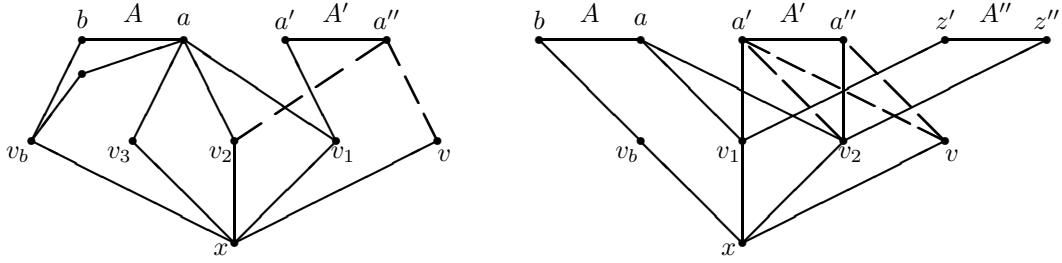

\begin{lemma}
\label{star2} If $N(a) \cap N(x) = \{v_{1},v_{2}\}$ then one of the following two options holds.
\begin{itemize}
\item $N(\{v_{1},v_{2}\}) \cap N_{2}(x) = \{a\}$.
\item There exists exactly one connected component $A' \neq A$ of $N_{2}(x)$ such that $\{v_{1},v_{2}\} = N(V(A')) \cap N(x)$. For every connected component $A \neq A'' \neq A'$ of $N_{2}(x)$, it holds that $\{v_{1},v_{2}\} \cap N(V(A'')) = \emptyset$.
\end{itemize}
\end{lemma}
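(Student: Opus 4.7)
The plan is a contrapositive argument: assume the first option fails, so there is a vertex $a' \in N_2(x) \setminus \{a\}$ adjacent (without loss of generality) to $v_1$; let $A'$ be the connected component of $G[N_2(x)]$ in $H_x$ that contains $a'$. I then aim to prove in turn (i)~$A' \neq A$, (ii)~$N(V(A')) \cap N(x) = \{v_1, v_2\}$, and (iii) uniqueness of such an $A'$.

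For step (i), Corollary \ref{gxystar} gives $A = K_{1,r}$. If $r \geq 2$ and $a$ were a leaf of $A$, Lemma \ref{cbipartite6} would force $|N(a) \cap N(x)| = 1$, contradicting $|N(a) \cap N(x)| = 2$. So either $r = 1$ or $a$ is the centre of $A$. In either case, every $c \in V(A) \setminus \{a\}$ is joined to $b$ in $A$ by a walk of length $0$ or $2$, so Lemma \ref{p3endpoints} yields $N(x) \cap N(c) = N(x) \cap N(b)$, which by Lemma \ref{gxy2dom}(4) is disjoint from $\{v_1, v_2\}$. Thus $a$ is the only vertex of $V(A)$ adjacent to $v_1$ or $v_2$, forcing $a' \notin V(A)$.

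For step (ii), fix a neighbour $b'$ of $a'$ in $A'$ (which exists because $|V(A')| \geq 2$ by Corollary \ref{gxystar}). I establish the inclusion $N(V(A')) \cap N(x) \subseteq \{v_1, v_2\}$ by forbidden cycles: if some $v \in N(x) \setminus \{v_1, v_2\}$ were a neighbour of $V(A')$, Lemma \ref{gxy2dom}(1) forces $v \sim a'$ or $v \sim b'$; in the first case $(a, v_2, x, v, a', v_1)$ is a $6$-cycle, and in the second (with $v \not\sim a'$) $(x, v, b', a', v_1, a, v_2)$ is a $7$-cycle, both contradicting $G \in \mathcal{G}(\widehat{C}_6, \widehat{C}_7)$. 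Combined with parts (2)--(4) of Lemma \ref{gxy2dom}, this forces $N(a') \cap N(x)$ and $N(b') \cap N(x)$ to be disjoint nonempty subsets of $\{v_1, v_2\}$ whose union is all of $\{v_1, v_2\}$; after relabelling if needed, $a' \sim v_1$ and $b' \sim v_2$.

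For step (iii), suppose there were another component $A'' \notin \{A, A'\}$ of $G[N_2(x)]$ containing a vertex adjacent to $\{v_1, v_2\}$. Applying the arguments of (i)--(ii) to $A''$ yields adjacent $a'', b'' \in V(A'')$ with $a'' \sim v_1$ and $b'' \sim v_2$, so that $(a', v_1, a'', b'', v_2, b')$ is a $6$-cycle, exactly as depicted on the right of Figure \ref{figk1312tmp}, contradicting $G \in \mathcal{G}(\widehat{C}_6)$. I expect the main delicacy throughout to lie in the bookkeeping for each forbidden cycle: verifying that the listed $6$ or $7$ vertices are pairwise distinct, which relies on vertex-disjointness of different components of $G[N_2(x)]$ in $H_x$, on $N(x) \cap N_2(x) = \emptyset$, and on the standing assumption $v_1 \neq v_2$.
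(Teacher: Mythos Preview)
Your proof is correct and follows essentially the same approach as the paper: assume the first option fails, pick $a' \sim v_1$ in some other component $A'$, use forbidden $6$- and $7$-cycles through $x$, $a$, $a'$ and a neighbour of $a'$ in $A'$ to pin down $N(V(A')) \cap N(x) = \{v_1,v_2\}$, and then exhibit a $6$-cycle to rule out a third component meeting $\{v_1,v_2\}$. The only organizational difference is that the paper first proves $N(a'') \cap N(x) = \{v_2\}$ and $N(a') \cap N(x) = \{v_1\}$ separately, whereas you first establish the containment $N(V(A')) \cap N(x) \subseteq \{v_1,v_2\}$ and then invoke Lemma~\ref{gxy2dom}(2)--(4) to split it; your added care in step~(i) and in checking distinctness of cycle vertices is welcome and not spelled out in the paper.
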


\begin{proof}
Clearly, $A$ is $K_{1,r}$, for some $r \geq 1$, and $a$ is adjacent to all vertices of $V(A) \setminus \{a\}$. By Lemma \ref{gxy2dom}, there exists a vertex $v_{b} \in (N(b) \setminus N(a)) \cap N(x)$. 

Assume that the first option of Lemma \ref{star2} does not hold. There exists a vertex $a' \in ( N_{2}(x) \cap N(v_{1}) ) \setminus \{ a \}$. Lemma \ref{gxy2dom} implies that $a' \not\in V(A)$. Let $A'$ be the connected component of $N_{2}(x)$ which contains $a'$, and let $a'' \in N_{2}(x) \cap N(a')$.

We prove that $N(a'') \cap N(x) = \{v_{2}\}$. Lemma \ref{gxy2dom} implies that there exists a vertex $v \in ( N(a'') \cap N(x) ) \setminus \{v_{1}\}$. Assume on the contrary that $v \neq v_{2}$. Then $(x, v, a'', a', v_{1}, a, v_{2})$ is a cycle of length 7, which is a contradiction. (See Fig. \ref{figk1312tmp}.) Therefore, $N(a'') \cap N(x) = \{v_{2}\}$.

We prove that $N(a') \cap N(x) = \{v_{1}\}$. By Lemma \ref{gxy2dom}, $v_{2} \not\in N(a')$. If there exists a vertex $v \in ( N(x) \cap N(a') ) \setminus \{v_{1},v_{2}\}$ then $(x, v, a', v_{1}, a, v_{2})$ is a cycle of length 6, which is a contradiction. Therefore, $N(a') \cap N(x) = \{v_{1}\}$. 

Lemma \ref{gxy2dom} implies that $N(V(A')) \cap N(x) = N( \{ a', a'' \} ) \cap N(x) = \{v_{1},v_{2}\}$. Assume on the contrary that there exits a connected component $A \neq A'' \neq A'$ of $N_{2}(x)$ with vertices $z'$ and $z''$, which are adjacent to $v_{1}$ and $v_{2}$, respectively. Then $(z', v_{1}, a', a'', v_{2}, z'')$ is a cycle of length 6, which is a contradiction. The second option of Lemma \ref{star2} holds.
\end{proof}

\begin{corollary}
\label{star2dominate} If $|N(a) \cap N(x)| = 2$ then in the graph $H_{x}$ every independent set of $N_{2}(x)$ which dominates $N(x)$, contains the vertex $a$.
\end{corollary}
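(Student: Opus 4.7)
The plan is to let $\{v_{1}, v_{2}\} = N(a) \cap N(x)$, take an arbitrary independent set $S \subseteq N_{2}(x)$ in $H_{x}$ that dominates $N(x)$, and prove $a \in S$ by splitting along the two alternatives of Lemma~\ref{star2}.

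The first alternative, $N(\{v_{1}, v_{2}\}) \cap N_{2}(x) = \{a\}$, is immediate: the only potential dominator of $v_{1}$ from $N_{2}(x)$ is $a$ itself, and since $v_{1} \in N(x)$ cannot be in $S \subseteq N_{2}(x)$, domination forces $a \in S$.

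The second alternative is where the work lies. By Lemma~\ref{star2} there is a unique component $A' \neq A$ of $G[N_{2}(x)]$ (inside $H_{x}$) with $N(V(A')) \cap N(x) = \{v_{1}, v_{2}\}$, and no other component of $N_{2}(x)$ meets $\{v_{1}, v_{2}\}$. So if $a \notin S$, both $v_{1}$ and $v_{2}$ must be dominated by vertices of $V(A')$. I would then unpack the structure of $A'$: by Corollary~\ref{gxystar} it is a star $K_{1,r}$. If $r = 1$, the proof of Lemma~\ref{star2} already furnishes $V(A') = \{a', a''\}$ with $a' \sim v_{1}$ only and $a'' \sim v_{2}$ only and $a' \sim a''$; so dominating both $v_{1}$ and $v_{2}$ would force $\{a', a''\} \subseteq S$, contradicting independence. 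If $r \geq 2$, Lemma~\ref{cbipartite6} forces the set of leaves to share a unique common $N(x)$-neighbour — necessarily one of $v_{1}, v_{2}$, say $v_{1}$ — and Condition~4 of Lemma~\ref{gxy2dom} applied to the center and any leaf then prevents the center from being adjacent to $v_{1}$. Consequently the center is the only vertex of $V(A')$ adjacent to $v_{2}$, while the leaves are the only vertices of $V(A')$ adjacent to $v_{1}$. Dominating $v_{1}$ would then require $S$ to contain a leaf and dominating $v_{2}$ would require $S$ to contain the center, but any leaf is adjacent to the center in the star $A'$, again contradicting independence. Either way $a \notin S$ is impossible, so $a \in S$.

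The main technical obstacle will be the $r \geq 2$ subcase: producing the clean ``leaves hit $v_{1}$ only, center hits $v_{2}$ only'' split from the combination of Corollary~\ref{gxystar}, Lemma~\ref{cbipartite6}, and Condition~4 of Lemma~\ref{gxy2dom}. Once that structural fact is in place, the contradiction is a single-line independence argument using the edges of the star $A'$ (respectively the edge $a'a''$ when $r = 1$).
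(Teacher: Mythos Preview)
Your argument follows the paper's proof closely: split on the two alternatives of Lemma~\ref{star2} and, in the second, show that no independent subset of $V(A')$ can dominate both $v_{1}$ and $v_{2}$. The paper simply asserts this last fact in one sentence; you supply the details via the star structure of $A'$, which is fine and in fact more careful than the paper.

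There is one small omission. The step ``if $a \notin S$, both $v_{1}$ and $v_{2}$ must be dominated by vertices of $V(A')$'' silently assumes that no vertex of $A \setminus \{a\}$ is adjacent to $v_{1}$ or $v_{2}$; Lemma~\ref{star2} only rules out components $A'' \neq A, A'$. The missing line is easy: since $|N(a) \cap N(x)| = 2$, Lemma~\ref{cbipartite6} forces $a$ to be the center of the star $A$ (or $A = K_{1,1}$), so every other vertex of $A$ is adjacent to $a$, and Condition~4 of Lemma~\ref{gxy2dom} then keeps it out of $N(v_{1}) \cup N(v_{2})$. (The paper's own proof glosses over the same point.)
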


\begin{proof} 
Denote $N(a) \cap N(x) = \{v_{1},v_{2}\}$. Let $S$ be a maximal independent set of $N_{2}(x)$ which dominates $N(x)$. If $N(\{v_{1},v_{2}\}) \cap N_{2}(x) = \{a\}$ then obviously $a \in S$.

If $N(\{v_{1},v_{2}\}) \cap N_{2}(x) \neq \{a\}$ then, by Lemma \ref{star2}, there exits only one component $A' \neq A$ of $N_{2}(x)$ with vertices adjacent to $\{v_{1},v_{2}\}$. However, no independent set in $A'$ dominates both $v_{1}$ and $v_{2}$. Therefore, $a \in S$.
\end{proof}

\begin{corollary}
\label{2stars} 
If $|N(a) \cap N(x)| = |N(b) \cap N(x)| = 2$ then in the graph $H_{x}$ there does not exist an independent set of $N_{2}(x)$ which dominates $N(x)$.
\end{corollary}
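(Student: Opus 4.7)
The plan is to apply Corollary \ref{star2dominate} twice and observe an immediate incompatibility. Recall the standing setup of this subsection: $a$ and $b$ are two adjacent vertices in a connected component $A$ of $G[N_2(x)]$ inside $H_x$. Under the hypothesis $|N(a) \cap N(x)| = 2$, Corollary \ref{star2dominate} forces every independent set of $N_2(x)$ in $H_x$ that dominates $N(x)$ to contain $a$. Symmetrically, the hypothesis $|N(b) \cap N(x)| = 2$ forces every such set to contain $b$.

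Therefore, any independent set $S \subseteq N_2(x)$ of $H_x$ that dominates $N(x)$ would have to satisfy $\{a,b\} \subseteq S$. But $a$ and $b$ are adjacent, contradicting the independence of $S$. Hence no such set exists, which is the statement of the corollary.

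The argument is essentially a one-line deduction, so there is no real obstacle; the substantive work has already been done in Corollary \ref{star2dominate}. The only thing to be careful about is to invoke the corollary in $H_x$ (where it was proved) rather than in $G$, and to remember that $a$ and $b$ are adjacent by the standing convention of this subsection, which is exactly what makes $\{a,b\} \subseteq S$ impossible for an independent $S$.
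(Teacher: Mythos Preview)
Your proof is correct and matches the paper's own argument essentially verbatim: apply Corollary~\ref{star2dominate} to each of $a$ and $b$ to force $\{a,b\}\subseteq S$, then note that the adjacency of $a$ and $b$ contradicts the independence of $S$.
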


\begin{proof}
Suppose on the contrary that $H_{x}$ contains an independent set $S \subseteq N_{2}(x)$ which dominates $N(x)$. By Corollary \ref{star2dominate}, $S$ contains both $a$ and $b$, which contradicts the fact that $S$ is independent.  
\end{proof}

Let $S^{*} = \{ v \in V(H_{x}) \ | \ v \in N_{2}(x), \ |N(v) \cap N(x)| > 1 \}$. By Corollary \ref{star2dominate}, every independent set in $N_{2}(x)$ which dominates $N(x)$ contains $S^{*}$. Therefore, if $S^{*}$ is not independent then there does not exist an independent set in $N_{2}(x)$ which dominates $N(x)$. In this case $x$ is extendable in $H_{x}$ and, by Lemma \ref{omit}, $x$ is also extendable in $G$. 

Define $H^{*}_{x}$ to be the induced subgraph of $H_{x}$ with vertex set $V(H^{*}_{x})=V(H_{x}) \setminus N[S^{*}]$.

\begin{lemma}
\label{Gstarequiv} Suppose $S^{*}$ is independent. Then x is extendable in $H_{x}$ if and only if it is extendable in $H^{*}_{x}$. 
\end{lemma}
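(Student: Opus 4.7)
The plan is to prove the biconditional by transferring witnesses of non-extendability between $H_{x}$ and $H_{x}^{*}$. In one direction we remove $S^{*}$ from a witness in $H_{x}$; in the other we adjoin $S^{*}$ to a witness in $H_{x}^{*}$.

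For the forward direction, let $S \subseteq N_{2}(x)$ be a witness in $H_{x}$. The key first step is to show $S^{*} \subseteq S$. If $a \in S^{*}$ satisfies $|N(a) \cap N(x)| = 2$, this is Corollary \ref{star2dominate}; if $|N(a) \cap N(x)| \geq 3$, Lemma \ref{star} identifies $a$ as the unique vertex in $N_{2}(x)$ that dominates the vertices of $N(a) \cap N(x) \subseteq N(x)$, again forcing $a \in S$. Since $S$ is independent and $S^{*} \subseteq S$, the set $S \setminus S^{*}$ is disjoint from $N[S^{*}]$, so it lies inside $V(H_{x}^{*}) = V(H_{x}) \setminus N[S^{*}]$. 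My candidate witness in $H_{x}^{*}$ is $S'' = \{v \in S \setminus S^{*} : v \text{ has a neighbor in } N_{H_{x}^{*}}(x)\}$. Independence is inherited, and $S'' \subseteq N_{2}(x)$ of $H_{x}^{*}$ by construction. Domination of $N_{H_{x}^{*}}(x)$ follows because any $y \in N(x) \setminus N[S^{*}]$ was dominated by some $v \in S$; since $y \notin N[S^{*}]$, we cannot have $v \in S^{*}$, and the adjacency $v \sim y$ with $y \in N_{H_{x}^{*}}(x)$ places $v$ in $S''$.

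For the backward direction, let $T$ be a witness in $H_{x}^{*}$ and set $S = T \cup S^{*}$. The set $S$ is independent: $T$ is independent by hypothesis, $S^{*}$ is independent by the standing assumption of the lemma, and $T \subseteq V(H_{x}^{*}) = V(H_{x}) \setminus N[S^{*}]$ rules out edges between $T$ and $S^{*}$. Any $y \in N_{H_{x}}(x)$ is dominated either by $S^{*}$ (if $y \in N(S^{*})$) or by $T$ (if $y \in N_{H_{x}^{*}}(x)$), so $S$ is the desired witness in $H_{x}$. The only subtle point, and the main obstacle, lies in the forward direction: deleting $N[S^{*}]$ can eliminate all common neighbors of some $v \in S \setminus S^{*}$ with $x$, pushing $v$ out of $N_{2}(x)$ in $H_{x}^{*}$. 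Passing to $S''$ is harmless because any such discarded $v$ satisfies $|N(v) \cap N(x)| = 1$ (as $v \notin S^{*}$), and its unique neighbor in $N(x)$ is precisely what was deleted, so $v$ cannot have contributed to dominating $N_{H_{x}^{*}}(x)$ in the first place.
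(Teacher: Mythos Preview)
Your proof is correct and follows essentially the same approach as the paper's: transfer witnesses of non-extendability between $H_x$ and $H_x^{*}$ by removing or adjoining $S^{*}$. You are in fact slightly more careful than the paper in two places: you separately invoke Lemma~\ref{star} for the case $|N(a)\cap N(x)|\geq 3$ (the paper's text just cites Corollary~\ref{star2dominate}, which literally covers only the case $|N(a)\cap N(x)|=2$), and you explicitly prune $S\setminus S^{*}$ down to $S''$ to guarantee containment in $N_2(x)$ of $H_x^{*}$, a point the paper glosses over. One cosmetic remark: what you call the ``forward'' direction is the contrapositive of the $\Leftarrow$ implication in the lemma as stated (and vice versa), but since you prove both implications this is harmless.
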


\begin{proof} Let $S_{1} \subseteq V(H_{x})$ be an independent set in $N_{2}(x)$ which dominates $N(x)$. Corollary \ref{star2dominate} implies that $S^{*} \subseteq S_{1}$. Define $S_{2} \subseteq V(H^{*}_{x})$ by $S_{2} = S_{1} \setminus S^{*}$. Then in the graph $H^{*}_{x}$ the set $S_{2}$ is independent, contained in $N_{2}(x)$, and dominates $N(x)$. 

Let $S_{2} \subseteq V(H^{*}_{x})$ be an independent set in $N_{2}(x)$ which dominates $N(x)$. Define $S_{1} \subseteq V(H_{x})$ by $S_{1} = S_{2} \cup S^{*}$. Then $S_{1}$ is an independent set in $N_{2}(x)$ which dominates $N(x)$. 
\end{proof}

\begin{theorem}
\label{c6c7pol} The following problem can be solved in $O\left(  \left\vert V\right\vert \left(  \left\vert V\right\vert +\left\vert E\right\vert \right)  \right)  $ time.\\
Input: A graph $G \in \mathcal{G}(\widehat{C}_{6},\widehat{C}_{7})$ and a vertex $x \in V(G)$. \\
Question: Is x extendable?
\end{theorem}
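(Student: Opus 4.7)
The plan is to implement the chain of reductions developed in this section.

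\emph{Reduction to $H_x$.} Run BFS from $x$ to compute $N(x)$, $N_{2}(x)$ and the connected components of $G[N_{2}(x)]$ in $O(|V|+|E|)$. For each component $A$, test whether $A\in A^{*}$ by searching for a vertex $a\in V(A)$ with $N(a)\cap N(x)=N(V(A))\cap N(x)$; Corollary~\ref{n2types} tells us such an $a$ exists precisely when $|V(A)|=1$, when $A$ contains an odd cycle (Lemma~\ref{codd}), or when $A$ is bipartite with one side dominating a single vertex of $N(x)$ (Lemma~\ref{cbipartite6}). Delete $N[V(A^{*})]$ to obtain $H_{x}$. By Lemma~\ref{omit}, $x$ is extendable in $G$ iff $x$ is extendable in $H_{x}$.

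\emph{Handling $S^{*}$.} Compute $S^{*}=\{v\in N_{2}(x)\cap V(H_{x}) : |N(v)\cap N(x)|>1\}$ by a single pass over the adjacency lists, and check in linear time whether $S^{*}$ is independent. If $S^{*}$ contains an edge, then by Corollary~\ref{star2dominate} no independent set in $N_{2}(x)$ dominates $N(x)$, so the algorithm outputs ``extendable''.

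\emph{Reduction to $H^{*}_{x}$.} Otherwise, build $H^{*}_{x}=H_{x}[V(H_{x})\setminus N[S^{*}]]$; Lemma~\ref{Gstarequiv} preserves the extendability status. In $H^{*}_{x}$, every $v\in N_{2}(x)$ has exactly one neighbour in $N(x)$, and every connected component of $G[N_{2}(x)]$ is a star $K_{1,r}$ by Corollary~\ref{gxystar}.

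\emph{Decision in $H^{*}_{x}$.} For each star component with centre $c$ and leaves $\ell_{1},\ldots,\ell_{r}$, an independent subset is either $\{c\}$ (covering the unique vertex $f(c)\in N(x)$) or a subset of $\{\ell_{1},\ldots,\ell_{r}\}$ (covering the corresponding subset of $\{f(\ell_{1}),\ldots,f(\ell_{r})\}$). Distinct components contribute independently because they lie in different components of $G[N_{2}(x)]$, and taking \emph{all} leaves is never worse than taking a proper subset. This leaves a single binary choice per star, and the task is to make these choices so that every $u\in N(x)\cap V(H^{*}_{x})$ is covered. Using $\widehat{C}_{6}$- and $\widehat{C}_{7}$-freeness through short forbidden-cycle arguments in the style of Lemmas~\ref{star} and~\ref{star2}, one controls how many star components can compete to cover a single vertex $u$ and reduces the constraint system to one solvable in linear time (e.g., by unit propagation or a 2-SAT encoding). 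Declare $x$ extendable iff no satisfying assignment exists, and lift the witness back to $G$ by adjoining $S^{*}$ and one distinguished vertex per component of $A^{*}$ as in Lemmas~\ref{omit} and~\ref{Gstarequiv}.

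\emph{Complexity and obstacle.} The dominant cost is identifying the components of $A^{*}$: for each candidate $a\in N_{2}(x)$ we compare $N(a)\cap N(x)$ with the target neighbourhood in $O(|V|+|E|)$, giving $O(|V|(|V|+|E|))$ overall. All operations on $H_{x}$ and $H^{*}_{x}$ (computing $S^{*}$, testing independence, computing the star components, and the final covering decision) run in $O(|V|+|E|)$. The total running time is $O(|V|(|V|+|E|))$, as claimed. The main obstacle is the decision step in $H^{*}_{x}$: one must leverage the $\widehat{C}_{6},\widehat{C}_{7}$-free hypothesis to argue that the interaction between distinct star components is sufficiently restricted for a polynomial-time covering algorithm; the rest of the proof is routine BFS and neighbourhood manipulation already packaged in the earlier lemmas.
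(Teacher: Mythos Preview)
Your reductions to $H_{x}$ and then to $H^{*}_{x}$ match the paper exactly, and you correctly land at the picture in which each surviving component of $G[N_{2}(x)]$ is a star whose two sides each see $N(x)$ in a single vertex (in fact $f(\ell_{1})=\cdots=f(\ell_{r})$ by Lemma~\ref{cbipartite6}, so the ``all leaves versus a proper subset'' distinction is moot). The genuine gap is precisely the step you yourself label the ``main obstacle'': you never actually give an algorithm for the covering problem in $H^{*}_{x}$, only a promise that further forbidden-cycle arguments ``in the style of Lemmas~\ref{star} and~\ref{star2}'' will bound how many stars compete for a vertex and make 2-SAT or unit propagation go through. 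No such argument is supplied, and there is no a~priori reason the clauses are binary, so this is not a proof.

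The paper closes this gap concretely and differently: it builds a unit-capacity flow network with a source $s$, one node $z_{i}$ per star component $A_{i}$, the vertices of $N_{2}(x)$ and $N(x)$, and a sink $t$, with arcs $s\!\to\! z_{i}\!\to\! a\!\to\! v\!\to\! t$ for $a\in A_{i}$ and $v\in N(a)\cap N(x)$. Ford--Fulkerson then produces an independent set $S\subseteq N_{2}(x)$ maximizing $|N(S)\cap N(x)|$ (each $z_{i}$ forces at most one vertex of $A_{i}$ to carry flow, guaranteeing independence), and $x$ is extendable iff $S$ fails to dominate $N(x)\cap V(H^{*}_{x})$; the correctness is inherited from \cite{lt:relatedc4}. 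Your binary-choice picture is the same combinatorial content---pick one side of each star to cover one of two designated vertices of $N(x)$---and could in principle be finished as an orientation problem on an auxiliary multigraph rather than via flow, but as written your proposal stops short of any such argument, so it is incomplete at exactly the point where the paper's algorithmic contribution lies. (A minor aside: the paper computes $A^{*}$ in $O(|V|+|E|)$, not $O(|V|(|V|+|E|))$; the dominant cost in the paper is Ford--Fulkerson.)
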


\begin{proof}
Let $A$ be a connected component of $N_{2}(x)$ in the graph $H^{*}_{x}$. Then $A$ is $K_{1,r}$ for some $r \geq 1$. Moreover, $A$ is adjacent to exactly 2 vertices of $N(x)$, because otherwise the vertices of $A$ were in $N[S^{*}]$, and not in $V(H^{*}_{x})$. Every vertex set of bipartition of $A$ dominates exactly one vertex of $N(x)$. In order to decide whether $x$ is extendable in $H^{*}_{x}$, we define a flow network. We use the same technique as in Theorem 2.2 of \cite{lt:relatedc4}.

Let $A_{1},...,A_{k}$ be the connected components of $N_{2}(x)$ in the graph $H^{*}_{x}$. Define a flow network $F_{x} = \{G_{F} = (V_{F},E_{F}) , s \in V_{F}, t \in V_{F}, c : E_{F} \longrightarrow \mathbb{R} \}$ as follows. (See Fig. \ref{figflow}.) Let $V_{F} = N_{1}(x) \cup N_{2}(x) \cup \{ z_{1},..., z_{k}, s, t\}$, where $z_{1}, ..., z_{k}, s, t$ are new vertices, $s$ and $t$ are the source and sink of the network, respectively. The directed edges $E_{F}$ are:

\begin{itemize}

\item the directed edges $sz_{i}$, for each $1 \leq i \leq k$;

\item the directed edges $z_{i}a$, for each $1 \leq i \leq k$ and for each $a \in A_{i}$;

\item all directed edges $v_{2}v_{1}$ such that $v_{2} \in N_{2}(x)$, $v_{1} \in N(x)$ and $v_{1}v_{2}\in E(H^{*}_{x})$;

\item the directed edges $vs$, for each $v \in N(x)$;

\end{itemize}

Let $c\equiv1$. Invoke any polynomial time algorithm for finding a maximum flow $f : E_{F}\longrightarrow \mathbb{R}$ in the network, for example Ford and Fulkerson's algorithm. The flow in a vertex $v \in V_{F}$ is defined by: $\Sigma_{(u,v)\in E_{F}}f(u,v)$. 

\begin{figure}[h]
\setlength{\unitlength}{1.0cm} \begin{picture}(20,9)\thicklines
\put(4.5,5){\circle*{0.1}}
\multiput(3,2)(0,1.5){5}{\circle*{0.1}}
\multiput(1.5,0.5)(0,2.25){4}{\circle*{0.1}}
\multiput(1.5,2)(0,2.25){4}{\circle*{0.1}}
\put(4.7,5){\makebox(0,0){$x$}}
\put(4.5,5){\line(-1,0){1.5}}
\put(4.5,5){\line(-1,-1){1.5}}
\put(4.5,5){\line(-1,-2){1.5}}
\put(4.5,5){\line(-1,1){1.5}}
\put(4.5,5){\line(-1,2){1.5}}
\multiput(1.5,0.5)(0,2.25){4}{\line(0,1){1.5}}
\put(1.5,0.5){\line(1,1){1.5}}
\put(1.5,2){\line(1,1){1.5}}
\put(1,0.5){\line(4,3){2}}
\put(0.5,0.5){\line(5,3){2.5}}
\put(1,0.5){\line(1,3){0.5}}
\put(0.5,0.5){\line(2,3){1}}
\put(1,0.5){\circle*{0.1}}
\put(0.5,0.5){\circle*{0.1}}
\put(1.5,2.75){\line(2,-1){1.5}}
\put(1.5,4.25){\line(2,1){1.5}}
\put(1.5,5){\line(1,0){1.5}}
\put(1.5,6.5){\line(1,0){1.5}}
\put(1.5,7.25){\line(2,-3){1.5}}
\put(1.5,8.75){\line(2,-1){1.5}}
\put(0.75,8.75){\line(3,-1){2.25}}
\put(0.75,8.75){\line(1,-2){0.75}}
\put(0.75,8.75){\circle*{0.1}}
\put(13.5,5){\circle*{0.1}}
\multiput(12,2)(0,1.5){5}{\circle*{0.1}}
\multiput(10.5,0.5)(0,2.25){4}{\circle*{0.1}}
\multiput(10.5,2)(0,2.25){4}{\circle*{0.1}}
\put(13.7,5){\makebox(0,0){$t$}}
\put(12,5){\vector(1,0){1.5}}
\put(12,3.5){\vector(1,1){1.5}}
\put(12,2){\vector(1,2){1.5}}
\put(12,6.5){\vector(1,-1){1.5}}
\put(12,8){\vector(1,-2){1.5}}
\put(10.5,0.5){\vector(1,1){1.5}}
\put(10.5,2){\vector(1,1){1.5}}
\put(10,0.5){\vector(4,3){2}}
\put(9.5,0.5){\line(5,3){2.5}}
\put(10,0.5){\circle*{0.1}}
\put(9.5,0.5){\circle*{0.1}}
\put(10.5,2.75){\vector(2,-1){1.5}}
\put(10.5,4.25){\vector(2,1){1.5}}
\put(10.5,5){\vector(1,0){1.5}}
\put(10.5,6.5){\vector(1,0){1.5}}
\put(10.5,7.25){\vector(2,-3){1.5}}
\put(10.5,8.75){\vector(2,-1){1.5}}
\put(9.75,8.75){\vector(3,-1){2.25}}
\put(9.75,8.75){\circle*{0.1}}
\multiput(9,1.25)(0,2.25){4}{\circle*{0.1}}
\multiput(9,1.25)(0,2.25){4}{\vector(2,1){1.5}}
\multiput(9,1.25)(0,2.25){4}{\vector(2,-1){1.5}}
\put(9,1.25){\vector(2,-3){0.5}}
\put(9,1.25){\vector(4,-3){1}}
\put(9,8){\vector(1,1){0.75}}
\put(7.875,4.625){\circle*{0.1}}
\put(7.875,4.625){\vector(1,1){1.125}}
\put(7.875,4.625){\vector(1,-1){1.125}}
\put(7.875,4.625){\vector(1,3){1.125}}
\put(7.875,4.625){\vector(1,-3){1.125}}
\put(7.675,4.625){\makebox(0,0){$s$}}
\put(0.7,1.25){\makebox(0,0){$A_{1}$}}
\put(1.2,3.5){\makebox(0,0){$A_{2}$}}
\put(1.2,5.75){\makebox(0,0){$A_{3}$}}
\put(0.9,8){\makebox(0,0){$A_{4}$}}
\put(8.7,1.25){\makebox(0,0){$z_{1}$}}
\put(9,3.8){\makebox(0,0){$z_{2}$}}
\put(9,5.4){\makebox(0,0){$z_{3}$}}
\put(8.7,8){\makebox(0,0){$z_{4}$}}
\put(5.5,5){\line(1,0){1.5}}
\put(6.5,5.5){\line(1,-1){0.5}}
\put(6.5,4.5){\line(1,1){0.5}}
\end{picture}
\caption{An example of the construction of the flow network $F_{x}$ (right) from $H^{*}_{x}$ (left).}
\label{figflow}
\end{figure}
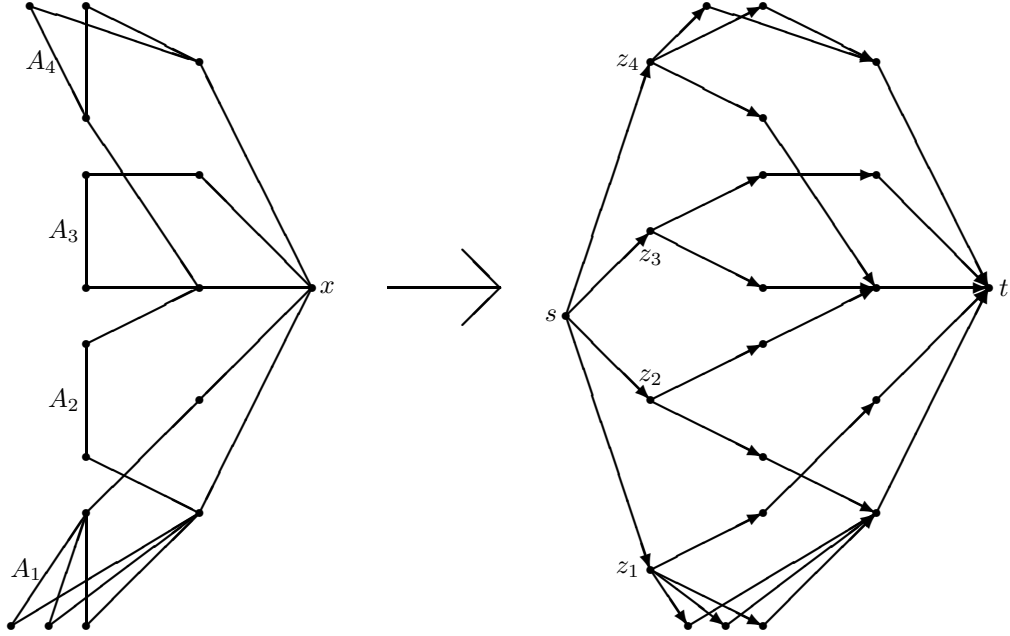

Let $S$ be the set of vertices in $N_{2}(x)$ in which there is a positive flow. It is easy to prove that $S$ is independent, and every augmenting path in Ford and Fulkerson's algorithm increases by one each of $|f|$, $|S|$ and $|N(x) \cap N(S)|$. Therefore, $|f| = |S| = |N(x) \cap N(S)|$. Moreover, for every independent set $S^{\prime}$ of $N_{2}(x)$, it holds that $|N(x) \cap N(S)| \geq |N(x) \cap N(S^{\prime})|$. For more details see \cite{lt:relatedc4}. 

If $S$ dominates $N(x)$ in $H^{*}_{x}$ then obviously $x$ is not extendable. Otherwise, there does not exist an independent set in $N_{2}(x)$ which dominates $N(x)$, and therefore $x$ is extendable.

The following polynomial algorithm receives as its input a graph $G \in \mathcal{G}(\widehat{C}_{6},\widehat{C}_{7})$ and a vertex $x \in V(G)$. If $x$ is extendable, the algorithm returns $\emptyset$. Otherwise, the algorithm returns a witness that $x$ is not extendable.

\begin{algorithm}[H]
\DontPrintSemicolon
{ $T \longleftarrow \emptyset$ \; } 
{ Find the connected components of $N_{2}(x)$ \; } 
{ Find $A^{*}$ \; } 
{ \For{{\bf each} $A \in A^{*}$ } 
{ { Choose $v_{A} \in V(A)$ such that $N(v_{A}) \cap N(x) = N(V(A)) \cap N(x)$ \; } 
{ $T \longleftarrow T \cup \{ v_{A} \} $ \; } 
} 
}
{ Construct the graph $H_{x}$.  \; }
{ $S^{*} \longleftarrow \{ a \in V(H_{x}) \ | \ a \in N_{2}(x), | N(a) \cap N(x)| > 1 \}$.  \; }
{ \If{$S^{*}$ is not independent }
{ \Return $\emptyset$.  \; }
}
{ Construct the graph $H^{*}_{x}$.  \; }
{ Construct the flow network $F_{x}$  \; }
{ Find a maximum flow $f_{x} : E_{F}\longrightarrow \mathbb{R}$ in the network $F_{x}$.  \; }
{ Let $S$ be the set of vertices in $N_{2}(x)$ in which there is a positive flow. \; }
{ \If{$S$ does not dominate $N(x)$ in $H^{*}_{x}$}
{ \Return $\emptyset$.  \;}
}
{ \Return $T \cup S \cup S^{*}$.  \; }
\caption{Decide whether $x$ is extendable in $G \in \mathcal{G}(\widehat{C}_{6},\widehat{C}_{7})$ . \label{alghalfrelating}}
\end{algorithm}\medskip

Correctness of Algorithm \ref{alghalfrelating}: Follows from previous lemmas.

Complexity analysis of Algorithm \ref{alghalfrelating}: Constructing the set $T$ includes finding $N_{2}(x)$, the connected components of $G[N_{2}(x)]$, and the set $A^{*}$. This can be implemented in $O(|V| + |E|)$ time. Also finding the induced subgraphs $H_{x}$ can be implemented in $O(|V| + |E|)$ time. Finding $S^{*}_{x}$ and deciding whether it is independent can be done in $O(|V| +|E|)$ time, as well. This is also the complexity for constructing the induced subgraph $H^{*}_{x}$, and the flow network $F_{x}$.

One iteration of Ford and Fulkerson's algorithm can be implemented in $O\left(  \left\vert V\right\vert
+\left\vert E\right\vert \right)  $ time. In each iteration the number of
vertices in $N_{2}(x)$ with a positive flow increases by $1$. Therefore, the
number of iterations can not exceed $\left\vert V\right\vert $, and Ford and
Fulkerson's algorithm terminates in $O\left(  \left\vert V\right\vert \left(
\left\vert V\right\vert +\left\vert E\right\vert \right)  \right)$ time. 

Deciding whether $S_{x}$ dominates $N(x)$ can be done in $O(|V| +|E|)$ time. The total complexity of Algorithm \ref{alghalfrelating} is $O\left(  \left\vert V\right\vert \left(  \left\vert V\right\vert +\left\vert
E\right\vert \right)  \right)  $.
\end{proof}

\section{Relating Edges}
In this section we present a polynomial algorithm for recognizing relating edges in graphs without cycles of lengths 6 and 7.

\begin{lemma}
\label{c6split} Let $G \in \mathcal{G}(\widehat{C}_{6}, \widehat{C}_{7})$ and $xy \in E(G)$. The following conditions are equivalent.
\begin{enumerate}

\item The edge $xy$ is relating.

\item There exist an independent set, $S_{x} \subseteq N_{2}(x) \setminus N(y)$, which dominates $N(x) \cap N_{2}(y)$, and an independent set, $S_{y} \subseteq N_{2}(y) \setminus N(x)$, which dominates $N(y) \cap N_{2}(x)$.
\end{enumerate}
\end{lemma}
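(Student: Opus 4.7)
I would prove the equivalence in two directions. For $(1 \Rightarrow 2)$, given a witness $S$ of the relating edge, simply take $S_{x} := S \cap (N_{2}(x) \setminus N(y))$ and $S_{y} := S \cap (N_{2}(y) \setminus N(x))$. These are independent as subsets of $S$ and are contained in the required sets. The only nontrivial check is domination: for any $u \in N(x) \cap N_{2}(y)$, since $u \notin N[y]$ and $S \cup \{y\}$ is a maximal independent set of $G$, some $s \in S$ must be adjacent to $u$; because $u \in N(x)$ and $S$ avoids $N(x) \cup N(y)$, such an $s$ lies in $N_{2}(x) \setminus N(y)$, hence in $S_{x}$. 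The argument for $S_{y}$ is symmetric.

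For $(2 \Rightarrow 1)$ the plan is to merge $S_{x}$ and $S_{y}$ into a single witness. First I would replace $S_{x}$ and $S_{y}$ by subsets that are still independent, still dominate $N(x) \cap N_{2}(y)$ and $N(y) \cap N_{2}(x)$ respectively, and are minimal under inclusion. Minimality gives each $s \in S_{x}$ a private neighbor $u_{s} \in N(x) \cap N_{2}(y)$ for which $s$ is the unique vertex of $S_{x}$ adjacent to $u_{s}$, and symmetrically a private $v_{t} \in N(y) \cap N_{2}(x)$ for each $t \in S_{y}$. The crux is then to prove that $S_{x} \cup S_{y}$ is independent in $G$, and this is the step where the forbidden-cycle hypothesis is used. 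If some $s \in S_{x}$ and $t \in S_{y}$ with $s \neq t$ were adjacent, the closed walk $(x, u_{s}, s, t, v_{t}, y)$ would be a $6$-cycle; distinctness of its six vertices follows from the memberships $\{x,y\} \subseteq N[x] \cap N[y]$, $u_{s} \in N(x) \setminus N[y]$, $v_{t} \in N(y) \setminus N[x]$, $s \in N_{2}(x) \setminus N[y]$, and $t \in N_{2}(y) \setminus N[x]$. This contradicts $G \in \mathcal{G}(\widehat{C}_{6})$, so no such edge exists.

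Finally, I would extend $S_{x} \cup S_{y}$ to a maximal independent set $S$ of the induced subgraph $G[V \setminus (N[x] \cup N[y])]$. Then $S$ is independent in $G$, disjoint from $N[x] \cup N[y]$, and contains $S_{x} \cup S_{y}$. To verify that $S \cup \{x\}$ is a maximal independent set of $G$, I would partition $V$: vertices of $N[x]$ are dominated by $x$; vertices of $V \setminus (N[x] \cup N[y])$ are dominated by $S$ because $S$ is maximal independent in that induced subgraph; and the remaining vertices $N(y) \setminus N[x] = N(y) \cap N_{2}(x)$ are dominated by $S_{y} \subseteq S$ by hypothesis. Maximality of $S \cup \{y\}$ follows by the symmetric argument. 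The main obstacle throughout is the forbidden-$C_{6}$ argument that guarantees independence of $S_{x} \cup S_{y}$; the assumption $\widehat{C}_{7}$ does not appear to play a role in this particular lemma.
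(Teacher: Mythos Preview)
Your proof is correct and follows essentially the same route as the paper: for $(1\Rightarrow 2)$ intersect a witness with the two second neighbourhoods, and for $(2\Rightarrow 1)$ show $S_{x}\cup S_{y}$ is independent via a would-be $C_{6}$, then extend inside $G[V\setminus N[\{x,y\}]]$. The only difference is cosmetic: you pass to inclusion-minimal $S_{x},S_{y}$ so that the chosen neighbours $u_{s},v_{t}$ lie in $N(x)\cap N_{2}(y)$ and $N(y)\cap N_{2}(x)$, which makes the six vertices of the cycle manifestly distinct; the paper instead picks arbitrary $x'\in N(x)\cap N(x'')$, $y'\in N(y)\cap N(y'')$ and forms the same $6$-cycle without this precaution. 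Your observation that the $\widehat{C}_{7}$ hypothesis is not used here is also correct.
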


\begin{proof}
$1 \implies 2$: Let $S$ be a witness that $xy$ is relating. Define $S_{x} = S \cap N_{2}(x)$ and $S_{y} = S \cap N_{2}(y)$. Clearly, $S_{x}$ dominates $N(x) \cap N_{2}(y)$ and $S_{y}$ dominates $N(y) \cap N_{2}(x)$.

$2 \implies 1$: Let $x'' \in S_{x}$ and $y'' \in S_{y}$. There exist vertices $x' \in N(x) \cap N(x'')$ and $y' \in N(y) \cap N(y'')$. If $x''$ and $y''$ were adjacent then $(x'',x',x,y,y',y'')$ was a cycle of length 6. Therefore, $S_{x} \cup S_{y}$ is independent. Let $S$ be a maximal independent set of $G[V(G) \setminus N[\{x,y\}]]$ which contains $S_{x} \cup S_{y}$. Then $S$ is a witness that $xy$ is relating.
\end{proof}

\begin{theorem}
\label{relatingpol} The following problem can be solved in $O\left(  \left\vert V\right\vert \left(  \left\vert V\right\vert +\left\vert E\right\vert \right)  \right)  $ time.\\
Input: A graph $G \in \mathcal{G}(\widehat{C}_{6},\widehat{C}_{7})$ and an edge $xy \in E(G)$. \\
Question: Is xy relating?
\end{theorem}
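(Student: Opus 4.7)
The plan is to invoke Lemma \ref{c6split} to split the question ``is $xy$ relating?'' into two independent subquestions: whether there exists an independent set $S_x \subseteq N_2(x)\setminus N(y)$ dominating $N(x)\cap N_2(y)$, and, symmetrically, whether there exists an independent set $S_y \subseteq N_2(y)\setminus N(x)$ dominating $N(y)\cap N_2(x)$. Each subquestion can then be reduced to an extendability problem solvable by Theorem \ref{c6c7pol} applied to a suitable induced subgraph.

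For the first subquestion, define
\[
N'' = \{\, v \in N_2(x)\setminus N(y) : N(v)\cap N(x)\cap N_2(y) \neq \emptyset \,\},\quad G_x = G\bigl[\,\{x\} \cup (N(x)\cap N_2(y)) \cup N''\,\bigr].
\]
Because $G_x$ is an induced subgraph of $G$, it lies in $\mathcal{G}(\widehat{C}_6,\widehat{C}_7)$. A direct inspection of neighborhoods shows $N_{G_x}(x) = N(x)\cap N_2(y)$ and $N_{G_x,2}(x) = N''$; moreover, any vertex of $N_2(x)\setminus N(y)$ that is omitted from $N''$ has no neighbor in $N(x)\cap N_2(y)$ and hence cannot contribute to any dominating set, so no generality is lost. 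Therefore, an independent set in $N_2(x)\setminus N(y)$ dominating $N(x)\cap N_2(y)$ exists in $G$ if and only if $x$ is non-extendable in $G_x$, which Theorem \ref{c6c7pol} decides in $O(|V|(|V|+|E|))$ time while also producing a witness $S_x$ when the answer is affirmative. Build $G_y$ analogously and repeat to obtain $S_y$ (or conclude that no such $S_y$ exists).

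By Lemma \ref{c6split}, $xy$ is relating precisely when both of these extendability tests report non-extendability; the two witnesses $S_x$ and $S_y$ then combine into an independent set automatically, since any edge between them would close a $6$-cycle through $x$ and $y$ exactly as in the proof of Lemma \ref{c6split}. The full algorithm is therefore: construct $G_x$ and $G_y$, invoke Algorithm \ref{alghalfrelating} on each, and answer ``yes'' precisely when both calls return non-empty witnesses. Each auxiliary graph is assembled in $O(|V|+|E|)$ time, and the two calls to Algorithm \ref{alghalfrelating} dominate the running time at $O(|V|(|V|+|E|))$, matching the claimed bound.

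The main obstacle in the argument is verifying that passing from $G$ to $G_x$ does not corrupt the structural machinery of Section~2---specifically, that no vertex of $N''$ becomes adjacent to $x$ in $G_x$ and that every $v \in N''$ retains a length-$2$ route to $x$ inside $G_x$. Both facts follow immediately from the definition of $N''$ together with the disjointness of $N_2(x)\setminus N(y)$ from $N(x)$; once they are in place, every structural lemma from Section~2 applies to $G_x$ without modification, and the extendability subroutine works as a true black box.
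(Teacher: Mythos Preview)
Your proof is correct and follows the paper's strategy exactly: use Lemma~\ref{c6split} to split the problem into two independent extendability checks on induced subgraphs, then invoke Algorithm~\ref{alghalfrelating} on each. The paper differs only in two minor details: it takes the second-neighborhood layer of the auxiliary graph to be $N_{2}(x)\cap N_{3}(y)$ (a subset of your $N''$, since your set also admits vertices of $N_{2}(x)\cap N_{2}(y)$ that see $N(x)\cap N_{2}(y)$), and it explicitly special-cases $N(x)\cap N_{2}(y)=\emptyset$ before calling Algorithm~\ref{alghalfrelating}---you should add that guard as well, because Algorithm~\ref{alghalfrelating} returns $\emptyset$ both when $x$ is extendable and when $x$ is non-extendable with the empty set as witness, so your rule ``answer yes precisely when both calls return non-empty witnesses'' misfires in the vacuous case.
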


\begin{proof}
The following polynomial algorithm receives as its input a graph $G \in \mathcal{G}(\widehat{C}_{6},\widehat{C}_{7})$ and an edge $xy \in E(G)$. If $xy$ is relating, the algorithm returns an independent set in $N_{2}(\{x,y\})$ which dominates $N(x) \Delta N(y)$. Otherwise, $\emptyset$ is returned.

\begin{algorithm}[H]
\DontPrintSemicolon
{ $A \longleftarrow \emptyset$  \;  }
{ $B \longleftarrow \emptyset$  \;  }
{ \If{$N(x) \cap N_{2}(y) \neq \emptyset$}
{
{ $A \longleftarrow Alg1(G[(N_{2}(x) \cap N_{3}(y)) \cup ( N(x) \cap N_{2}(y) ) \cup \{x\} ],x)$  \; }
{ \If{$A=\emptyset$}
{ \Return $\emptyset$  \; }
}
}
}
{ \If{$N(y) \cap N_{2}(x) \neq \emptyset$}
{
{ $B \longleftarrow Alg1(G[(N_{2}(y) \cap N_{3}(x)) \cup ( N(y) \cap N_{2}(x) ) \cup \{y\} ],y)$  \; }
{ \If{$B=\emptyset$}
{ \Return $\emptyset$  \; }
}
}
}
{ \Return $A \cup B$.  \; }
\caption{Decide whether the edge $xy$ in the graph $G \in \mathcal{G}(\widehat{C}_{6},\widehat{C}_{7})$ is relating. \label{algrelating}}
\end{algorithm}\medskip

Correctness of Algorithm \ref{algrelating}: Follows from Lemma \ref{c6split}.

Complexity analysis of Algorithm \ref{algrelating}: The algorithm invokes Algorithm \ref{alghalfrelating} at most twice. Therefore, the complexity of Algorithm \ref{algrelating} is equal to the complexity of Algorithm \ref{alghalfrelating}, i.e. $O\left(  \left\vert V\right\vert \left(  \left\vert V\right\vert +\left\vert
E\right\vert \right)  \right)  $. 
\end{proof}

\section{Generating Subgraphs}
In this section we present a polynomial algorithm for recognizing generating subgraphs in $\mathcal{G}(\widehat{C}_{6},\widehat{C}_{7})$. Through this section $G \in \mathcal{G}(\widehat{C}_{6},\widehat{C}_{7})$, and $B$ is an induced complete bipartite subgraph of $G$.

\begin{lemma}
\label{disjoint} Let $a$ and $b$ be two distinct vertices in $V(B)$. Let $a' \in N(a) \setminus V(B)$ and $b' \in N(b) \setminus V(B)$ such that $a' \neq b'$. Let $a'' \in N_{2}(a) \cap N(a')$ and $b'' \in N_{2}(b) \cap N(b')$. Then $a''$ and $b''$ are not adjacent.
\end{lemma}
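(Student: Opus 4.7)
The plan is to proceed by contradiction: suppose $a''b'' \in E(G)$. Chaining the edges $aa'$, $a'a''$, $a''b''$, $b''b'$, and $b'b$ yields a walk $W = (a, a', a'', b'', b', b)$ of length $5$ from $a$ to $b$. I would then exploit the fact that $B$ is an induced complete bipartite subgraph to close $W$ into a forbidden cycle in two ways depending on where $a$ and $b$ sit in the bipartition of $B$.

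If $a$ and $b$ lie on opposite sides of the bipartition of $B$, then $ab \in E(G)$, and $W$ together with the edge $ba$ yields the cycle $(a, a', a'', b'', b', b)$ of length $6$, contradicting $G \in \mathcal{G}(\widehat{C}_{6})$. Otherwise $a$ and $b$ lie on the same side of $B$, and any vertex $c$ on the opposite side satisfies $c \sim a$ and $c \sim b$ by the induced complete bipartite property; then $W$ together with the $2$-path $b, c, a$ yields the cycle $(a, a', a'', b'', b', b, c)$ of length $7$, contradicting $G \in \mathcal{G}(\widehat{C}_{7})$.

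The routine but essential step is to verify that the listed sequence is in each case a simple cycle, i.e., that its six (resp. seven) vertices are pairwise distinct. Most pairs separate immediately using three classifications: membership in $V(B)$ versus $V(G) \setminus V(B)$ (which separates $\{a, b, c\}$ from $\{a', b'\}$); the distance stratification $N(a)$ versus $N_{2}(a)$ (so $a \neq a''$, $a' \neq a''$, and symmetrically on the $b$-side); and the hypotheses $a \neq b$ and $a' \neq b'$. In the second case the extra vertex $c$ is in $V(B)$ and adjacent to both $a$ and $b$, so it differs from $a', b'$ (outside $V(B)$) and from $a'', b''$ (which lie in $N_{2}(a)$ and $N_{2}(b)$, respectively). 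The main obstacle I anticipate is ruling out the remaining potential coincidences $a' = b''$ and $a'' = b'$: each forces an extra edge (for example $a' \sim b'$ when $a' = b''$), which in turn would contradict $b'' \in N_{2}(b)$ or $a'' \in N_{2}(a)$, or else would produce an alternative closed walk of length $6$ or $7$ via the chord $ab$ or the common neighbor $c$, again contradicting the cycle-freeness hypothesis on $G$.
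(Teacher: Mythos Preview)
Your approach matches the paper's proof exactly: split on whether $a$ and $b$ are adjacent, and exhibit a $C_6$ in the first case and a $C_7$ (via a common neighbour $c\in V(B)$) in the second. Your extra care about distinctness of the cycle vertices in fact goes beyond what the paper writes.

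The one place your argument does not close is precisely the obstacle you flag: the coincidences $a'=b''$ and $a''=b'$ cannot be ruled out from the stated hypotheses, and your suggested fix does not work. If $a'=b''$ you correctly observe that $a'b'\in E(G)$, but closing this up through the edge $ab$ (or through $c$) yields only a $C_4$ (respectively a $C_5$), neither of which is forbidden; and nothing here contradicts $b''\in N_2(b)$ or $a''\in N_2(a)$. Indeed, the lemma as stated is false in this degenerate case: take $V(G)=\{a,b,a',b',a''\}$ with edges $ab,\,aa',\,bb',\,a'b',\,a'a''$, let $B=G[\{a,b\}]\cong K_{1,1}$, and choose $b''=a'$. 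All hypotheses of the lemma are satisfied, $G$ has only five vertices and hence no $C_6$ or $C_7$, yet $a''$ and $b''=a'$ are adjacent. The paper's own proof has the same gap, since it never verifies that the listed vertices are distinct. In the paper's sole application of the lemma, however, one has $a'',b''\in N_2(V(B))$, which forces $b''\notin N(a)$ and $a''\notin N(b)$; this gives $b''\neq a'$ and $a''\neq b'$, and then your (and the paper's) argument goes through cleanly.
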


\begin{proof}
If $a$ and $b$ are adjacent, the lemma holds, or otherwise $(a'',a',a,b,b',b'')$ is a cycle of length 6. 

If $a$ and $b$ are not neighbors, there exists a vertex $c \in V(B) \cap N(a) \cap N(b)$. If $a'' \in N(b'')$ then $(a'',a',a,c,b,b',b'')$ is a cycle of length 7, which is a contradiction.
\end{proof}

For every vertex $b \in V(B)$, define $G_{b} = G[(N_{2}(b) \cap N_{2}(V(B))) \cup (N(b) \cap N(V(B))) \cup \{b\}]$. Since $G \in \mathcal{G}(\widehat{C}_{6},\widehat{C}_{7})$ and $G_{b}$ is an induced subgraph of $G$, also $G_{b} \in \mathcal{G}(\widehat{C}_{6},\widehat{C}_{7})$.

\begin{corollary}
\label{independent} Let $a$ and $b$ be two distinct vertices in $V(B)$. Let $S_{a} \subseteq V(G_{a}) \cap N_{2}(a)$ and $S_{b} \subseteq V(G_{b}) \cap N_{2}(b)$ be two independent sets. Then $S_{a} \cup S_{b}$ is an independent set in G.
\end{corollary}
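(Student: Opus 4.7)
The plan is to reduce the claim, pair by pair, to Lemma \ref{disjoint}. Since $S_a$ and $S_b$ are each independent on their own, it suffices to show that for every $a'' \in S_a$ and $b'' \in S_b$ with $a'' \neq b''$, the edge $a''b''$ is not in $E(G)$.

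First I would unpack the memberships. The intersection $V(G_a) \cap N_2(a)$ simplifies to $N_2(a) \cap N_2(V(B))$, because the other two parts of $V(G_a)$---namely $N(a) \cap N(V(B)) \subseteq N_1(a)$ and $\{a\} \subseteq N_0(\{a\})$---are disjoint from $N_2(a)$. So $a''$ is at distance exactly $2$ from $a$ and has no neighbor inside $V(B)$. Pick any $a' \in N(a) \cap N(a'')$, which exists because $d(a,a'') = 2$; since $a' \in N(a'')$ and $a''$ has no neighbor in $V(B)$, we must have $a' \notin V(B)$, hence $a' \in N(a) \setminus V(B)$. The symmetric argument supplies $b' \in N(b) \setminus V(B)$ with $b' \in N(b'')$.

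The main step is then a direct invocation of Lemma \ref{disjoint} on the data $a, b, a', b', a'', b''$: its conclusion is precisely the desired non-adjacency $a''b'' \notin E(G)$. Running this over every pair in $S_a \times S_b$ shows no edge crosses between $S_a$ and $S_b$, and combined with the internal independence of each side, $S_a \cup S_b$ is independent in $G$.

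The delicate point is the hypothesis $a' \neq b'$ required by Lemma \ref{disjoint}: if every admissible witness neighbor of $a''$ in $N(a) \setminus V(B)$ coincides with every admissible witness of $b''$ in $N(b) \setminus V(B)$, one has a single external vertex $c$ adjacent to all of $a, b, a'', b''$, and must argue separately. I expect this case to be handled by combining the edges $a''c$, $cb''$, and the putative $a''b''$ with either the edge $ab$ (when $a, b$ lie on opposite sides of $B$) or a length-$2$ path through $V(B)$ (when they lie on the same side) to produce a forbidden $C_6$ or $C_7$, contradicting $G \in \mathcal{G}(\widehat{C}_6,\widehat{C}_7)$.
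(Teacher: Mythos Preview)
Your main reduction to Lemma~\ref{disjoint} is exactly what the paper has in mind: the corollary is stated there without proof, as an immediate consequence of that lemma. You are also right to isolate the hypothesis $a' \neq b'$ of Lemma~\ref{disjoint} as the only non-trivial point.

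Unfortunately, your sketch for the degenerate case $a' = b'$ does not go through, and in fact it cannot be repaired: the corollary as literally stated is false. Take the bowtie graph on $\{a,b,c,a'',b''\}$ with edge set $\{ab,\ ac,\ bc,\ ca'',\ cb'',\ a''b''\}$; having only five vertices it lies in $\mathcal{G}(\widehat{C}_{6},\widehat{C}_{7})$. With $B=\{a,b\}$ (an induced $K_{1,1}$) one computes $N_2(a)\cap N_2(V(B))=N_2(b)\cap N_2(V(B))=\{a'',b''\}$, so $S_a=\{a''\}\subseteq V(G_a)\cap N_2(a)$ and $S_b=\{b''\}\subseteq V(G_b)\cap N_2(b)$ satisfy all hypotheses of the corollary, yet $S_a\cup S_b=\{a'',b''\}$ is not independent. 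In this configuration the only vertex in $N(a)\cap N(a'')$ and the only vertex in $N(b)\cap N(b'')$ are both $c$, so $a'=b'=c$ is forced, and the six edges above form nothing longer than two triangles sharing $c$; the forbidden $C_6$ or $C_7$ you hoped to exhibit simply is not there. Thus the gap you flagged is real and the paper glosses over it; closing it would require either an extra hypothesis on $S_a,S_b$ or a correction to the definition of $G_b$.
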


\begin{lemma}
\label{generatinglemma} The following conditions are equivalent.
\begin{enumerate}

\item The subgraph $B$ is generating in G.

\item The vertex $b$ is not extendable in $G_{b}$, for every $b \in V(B)$.
\end{enumerate}
\end{lemma}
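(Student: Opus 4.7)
The plan is to prove the two directions separately, using $G_b$ as a vehicle that localizes the global witness structure around each vertex of $V(B)$.

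For $(2) \Rightarrow (1)$, assume that for every $b \in V(B)$ there is an independent set $S_b \subseteq N_{G_b,2}(b)$ dominating $N_{G_b}(b)$ in $G_b$. Form $S^{\ast} = \bigcup_{b \in V(B)} S_b$. Every $S_b$ sits in $N_2(V(B))$, so Corollary~\ref{independent} applied pairwise shows $S^\ast$ is independent in $G$. Extend $S^\ast$ to a maximal independent set $\widetilde S$ of the induced subgraph $G[V(G) \setminus N[V(B)]]$; this is possible because $S^{\ast}$ already lies in $V(G)\setminus N[V(B)]$. I would then verify by a three-case analysis that $\widetilde S \cup B_X$ and $\widetilde S \cup B_Y$ are each maximal independent in $G$. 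Independence follows since $\widetilde S$ has no neighbor in $V(B)$. For maximality of, say, $\widetilde S \cup B_X$, any $v \notin \widetilde S \cup B_X$ falls into one of: $v \in B_Y$ (dominated by $B_X$ via the bipartite completeness of $B$); $v \in N(V(B)) \setminus V(B)$ (dominated by $B_X$ if adjacent to $B_X$, otherwise some $u \in B_Y$ is adjacent to $v$, putting $v \in N_{G_u}(u)$ and hence dominated by $S_u \subseteq \widetilde S$); or $v$ at distance at least $2$ from $V(B)$ (dominated because $\widetilde S$ is maximal in that induced subgraph). The argument for $\widetilde S \cup B_Y$ is symmetric.

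For $(1) \Rightarrow (2)$, let $S$ be a witness that $B$ is generating and set $S_b := S \cap V(G_b)$ for each $b \in V(B)$. The independence of both $S \cup B_X$ and $S \cup B_Y$ forces $S \cap N[V(B)] = \emptyset$, so $S_b \subseteq N_2(b) \cap N_2(V(B)) = N_{G_b,2}(b)$, and $S_b$ is independent as a subset of $S$. The remaining task is to show that $S_b$ dominates $N_{G_b}(b) = N(b) \cap N(V(B))$ in $G_b$. Given such $v$, assume WLOG $b \in B_X$; then maximality of $S \cup B_Y$ gives $v$ a neighbor in $S \cup B_Y$. If $v$ has a neighbor $s \in S$, then $s \notin V(B) \cup N(V(B))$ while $s$ is adjacent to $v \in N(b)$, forcing $d(s,b) = 2$ and $d(s,V(B)) = 2$, hence $s \in S_b$, as desired.

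The main obstacle is the residual sub-case of $(1) \Rightarrow (2)$ in which $v$ has no neighbor in $S$ and is dominated only through $B_Y$, so that $v$ is simultaneously adjacent to some $b_1 \in B_X$ and $b_2 \in B_Y$. Here the witness $S$ does not by itself hand us a dominator of $v$, and the argument must instead exploit the $\mathcal{G}(\widehat{C}_6,\widehat{C}_7)$ hypothesis together with Lemma~\ref{disjoint} in order to produce an appropriate $s \in N_2(V(B)) \cap N(v)$ that may be incorporated into $S_b$ without destroying independence. This structural step is the $B$-level analogue of the Corollary~\ref{n2types}--Lemma~\ref{star2} analysis carried out in Section~2 for a single vertex $x$, and it is the heart of why the equivalence holds specifically in the absence of $6$- and $7$-cycles.
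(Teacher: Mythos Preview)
Your $(2)\Rightarrow(1)$ coincides with the paper's argument: form $\bigcup_b S_b$, apply Corollary~\ref{independent} for independence, and extend to a maximal independent set of $G\setminus N[V(B)]$. The paper compresses all of this into one line (``$S$ is a witness that $B$ is a generating subgraph''), so your added detail is welcome but not a departure.

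For $(1)\Rightarrow(2)$ the paper writes only: set $S_b=S\cap V(G_b)$, then $S_b$ is a witness that $b$ is not extendable in $G_b$. You follow the same route and then isolate the sub-case the paper glosses over: a vertex $v\in N(b)\setminus V(B)$ adjacent to both sides of $B$ may have no neighbour in $S$, hence none in $S_b$. You then claim this can be repaired by using the $C_6,C_7$-free hypothesis together with Lemma~\ref{disjoint} to manufacture a dominator $s\in N_2(V(B))\cap N(v)$, but you do not carry this out---and it cannot be carried out. Let $G$ be the triangle on $\{x,y,v\}$ and $B$ the edge $xy$: then $G\in\mathcal{G}(\widehat{C}_6,\widehat{C}_7)$ and $B$ is generating (witness $S=\emptyset$), while $G_x=G[\{x,v\}]$ has $N_{G_x,2}(x)=\emptyset$, so $x$ \emph{is} extendable in $G_x$ and condition~(2) fails outright. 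No cycle-exclusion argument can supply an $s$ here, because $v$ has no neighbour outside $V(B)$ at all. What you have actually uncovered is that, with $G_b$ defined so that $N_{G_b}(b)=N(b)\cap N(V(B))=N(b)\setminus V(B)$, the equivalence is false; the paper's terse proof shares this gap rather than resolving it. The contrast with Section~3---where the analogous subgraph for an edge $xy$ uses $N(x)\cap N_2(y)$, explicitly excluding common neighbours of $x$ and $y$---suggests that the intended first neighbourhood in $G_b$ should omit vertices adjacent to the opposite side of $B$; under that reading your obstructive sub-case disappears and the two-line argument (yours and the paper's) goes through directly.
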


\begin{proof}

$1 \implies 2$: Let $S$ be a witness that $B$ is a generating subgraph of $G$. For every $b \in V(B)$ define $S_{b} = S \cap V(G_{b})$. Then $S_{b}$ is a witness that  $b$ is not extendable in $G_{b}$.

$2 \implies 1$: For every $b \in V(B)$, let $S_{b}$ be a witness that $b$ is not extendable in $G_{b}$. Define $S = \bigcup_{b \in V(B)} S_{b}$. By Corollary \ref{independent}, $S$ is independent. The set $S$ is a witness that $B$ is a generating subgraph of $G$.
\end{proof}

\begin{theorem}
\label{gengpol} The following problem can be solved in $O\left(  \left\vert V\right\vert ^{2} \left(  \left\vert V\right\vert +\left\vert E\right\vert \right)  \right)  $ time.\\
Input: A graph $G \in \mathcal{G}(\widehat{C}_{6},\widehat{C}_{7})$ and an induced bipartite subgraph $B$. \\
Question: Is B generating?
\end{theorem}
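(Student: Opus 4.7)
The plan is to reduce the question to $|V(B)|$ invocations of Algorithm \ref{alghalfrelating}, using Lemma \ref{generatinglemma} as the bridge. Given $G \in \mathcal{G}(\widehat{C}_{6},\widehat{C}_{7})$ and an induced complete bipartite subgraph $B$, I would loop over each $b \in V(B)$, build the induced subgraph
\[
G_{b} = G[(N_{2}(b) \cap N_{2}(V(B))) \cup (N(b) \cap N(V(B))) \cup \{b\}],
\]
and run Algorithm \ref{alghalfrelating} with inputs $G_{b}$ and $b$. If Algorithm \ref{alghalfrelating} reports that $b$ is extendable in $G_{b}$ for some $b \in V(B)$, then by Lemma \ref{generatinglemma} the subgraph $B$ is not generating and the algorithm returns $\emptyset$. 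Otherwise, each call produces a witness $S_{b} \subseteq V(G_{b}) \cap N_{2}(b)$, and the algorithm returns $S = \bigcup_{b \in V(B)} S_{b}$.

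Correctness follows immediately from Lemma \ref{generatinglemma}: condition (1) that $B$ be generating is equivalent to condition (2) that every $b \in V(B)$ fail to be extendable in $G_{b}$, and the union of the per-vertex witnesses $S_{b}$ is independent by Corollary \ref{independent}. The sets $G_{b} \in \mathcal{G}(\widehat{C}_{6},\widehat{C}_{7})$ because each $G_{b}$ is an induced subgraph of $G$, which is exactly the hypothesis needed to apply Theorem \ref{c6c7pol} (i.e.\ Algorithm \ref{alghalfrelating}) to each $G_{b}$.

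For the complexity bound, I would argue as follows. Constructing a single $G_{b}$ requires computing the relevant distance-$1$ and distance-$2$ neighborhoods, which can be done in $O(|V|+|E|)$ time. By Theorem \ref{c6c7pol}, running Algorithm \ref{alghalfrelating} on $(G_{b},b)$ costs $O(|V(G_{b})|(|V(G_{b})|+|E(G_{b})|)) = O(|V|(|V|+|E|))$. Since $|V(B)| \leq |V|$, the total time over all $b \in V(B)$ is $O(|V|^{2}(|V|+|E|))$, as claimed.

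The only potentially delicate point is ensuring that the witness returned is genuinely a witness for $B$ being generating, rather than merely a collection of per-vertex witnesses. This is handled by Corollary \ref{independent}, which I would invoke explicitly: it guarantees that for distinct $a,b \in V(B)$ any independent sets $S_{a} \subseteq V(G_{a}) \cap N_{2}(a)$ and $S_{b} \subseteq V(G_{b}) \cap N_{2}(b)$ have union that is independent in $G$. Because $B$ is complete bipartite and each $S_{b}$ lies in $N_{2}(b)$, the union $S = \bigcup_{b \in V(B)} S_{b}$ is disjoint from $V(B)$ and dominates $V(B)$ in precisely the way required, so $S \cup B_{X}$ and $S \cup B_{Y}$ may then each be extended (greedily) to maximal independent sets, certifying that $B$ is generating. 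Thus there is no real obstacle beyond assembling the previously established ingredients in the correct order.
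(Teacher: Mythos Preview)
Your proposal is correct and follows essentially the same approach as the paper: reduce to per-vertex extendability via Lemma \ref{generatinglemma}, invoke Algorithm \ref{alghalfrelating} on each $G_{b}$, and use Corollary \ref{independent} to glue the witnesses together, with the same $O(|V|^{2}(|V|+|E|))$ accounting. The only cosmetic difference is that the paper explicitly extends $S=\bigcup_{b}S_{b}$ to a maximal independent set of $G\setminus N[V(B)]$ before returning, whereas you mention this greedy extension in the surrounding discussion rather than as a formal step of the algorithm.
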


\begin{proof}
The following algorithm receives as its input a graph $G \in \mathcal{G}(\widehat{C}_{6},\widehat{C}_{7})$ and an induced complete bipartite subgraph, $B$. The algorithm returns a witness that $B$ is generating, if exists, and $\emptyset$ otherwise.

\begin{algorithm}[H]
\DontPrintSemicolon
{ \For{{\bf each} $b \in V(B)$ } 
{  
{ $S_{b} \longleftarrow Alg1(G_{b},b) $ \; } 
{ \If{$S_{b}=\emptyset$ and $|V(B)| > 1$}
{ \Return $\emptyset$  \; }
}
} 
}
{ $S \longleftarrow \bigcup_{b \in V(B)} S_{b}$.  \; }
{ Extract $S$ to a maximal independent set, $S^{*}$, of $G \setminus N[V(B)]$.  \; }
{ \Return $S^{*}$.  \; }
\caption{Decide whether the subgraph $B$ of $G \in \mathcal{G}(\widehat{C}_{6},\widehat{C}_{7})$ is generating. \label{alggenerating}}
\end{algorithm}\medskip

Correctness of Algorithm \ref{alggenerating}: Follows from Lemma \ref{generatinglemma}.

Complexity analysis of Algorithm \ref{alggenerating}: The complexity of Algorithm \ref{alghalfrelating} is $O\left(  \left\vert V\right\vert \left(  \left\vert V\right\vert +\left\vert E\right\vert \right)  \right)  $, and it is invoked at most $O(|V|)$ times. Therefore, the total complexity of Algorithm \ref{alggenerating} is $O\left(  \left\vert V\right\vert ^{2} \left(  \left\vert V\right\vert +\left\vert E\right\vert \right)  \right)  $.
\end{proof}

\section{Conclusions and Future Work}
We presented a polynomial algorithm for recognizing generating subgraphs in $\mathcal{G}(\widehat{C}_{6},\widehat{C}_{7})$. However, we did not find a polynomial algorithm which receives $G \in \mathcal{G}(\widehat{C}_{6},\widehat{C}_{7})$ as its input, and finds $WCW(G)$, the vector space of all weight functions $w$ such that $G$ is $w$-well-covered. Checking all induced complete bipartite subgraphs is obviously not a polynomial algorithm, since the number of checked subgraphs can be exponential. Nevertheless, we conjecture the following.

\begin{conjecture}
\label{wwc67pol} The following problem can be solved in polynomial time.\\
Input: A graph $G \in \mathcal{G}(\widehat{C}_{6},\widehat{C}_{7})$. \\
Output: $WCW(G)$.
\end{conjecture}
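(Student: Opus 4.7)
The overall plan is to express $WCW(G)$ as the kernel of a polynomial-size matrix. Each generating subgraph $B$ contributes a single linear equation $w(B_{X}) = w(B_{Y})$, and $WCW(G)$ is by definition the intersection of the kernels of all such equations. The enemy is that the number of induced complete bipartite subgraphs can be exponential, so the crux is to exhibit a polynomial family of ``basic'' generating subgraphs whose restrictions already linearly span those produced by \emph{every} generating subgraph. Once such a family is in hand, Theorem \ref{gengpol} certifies each member in polynomial time, and Gaussian elimination returns a basis of $WCW(G)$.

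As the first layer, I would handle all $K_{1,1}$ generating subgraphs. For each of the $O(|E|)$ edges, invoke Theorem \ref{relatingpol} to test whether it is relating; this costs $O(|V||E|(|V|+|E|))$ in total. The resulting equalities $w(x) = w(y)$ generate an equivalence relation on $V(G)$, on which any $w \in WCW(G)$ must be constant. Passing to the quotient graph $G/{\sim}$ reduces the problem to the situation where no $K_{1,1}$ is generating; every subsequent restriction produced by a larger generating subgraph descends cleanly to the quotient because it is a sum of weights on each side.

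For generating subgraphs $B$ with $|V(B)| \geq 3$, I would exploit $C_{6}, C_{7}$-freeness together with Lemma \ref{generatinglemma}, which reduces the generating question to a conjunction of per-vertex extendability tests. I would try to show that each ``maximal'' generating $B$ is determined by $O(1)$ seed vertices plus a canonical completion forced by the structural lemmas of Section~2 --- in particular, by the rigidity of Lemma \ref{disjoint}, by Corollary \ref{star2dominate}, and by the fact that inside each $G_b$ the component structure of $N_{2}(b)$ in the ``cleaned'' graph $H^{*}_{b}$ is highly constrained (connected components are stars, each touching at most two vertices of $N(b)$). Enumerating over $O(|V|^{c})$ choices of seed data for a small constant $c$ would then yield a polynomial list of candidate subgraphs, each of which is tested via Theorem \ref{gengpol}.

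The main obstacle, and in my view the reason the statement is only a conjecture, is proving that the restrictions coming from this polynomial family actually span the restriction space of \emph{all} generating subgraphs. Concretely, for an arbitrary generating $B$ one must write $w(B_{X}) - w(B_{Y})$ as an integer linear combination of the restrictions from the canonical family. I expect the required argument to proceed by induction on $|V(B)|$, decomposing $B$ along a suitably chosen vertex into two smaller generating subgraphs whose restrictions add to that of $B$. Producing such a decomposition uniformly, and ruling out ``irreducible'' generating configurations that admit no nontrivial splitting within $\mathcal{G}(\widehat{C}_{6},\widehat{C}_{7})$, is the step I anticipate to be the real difficulty; without it the algorithmic skeleton above remains only heuristic.
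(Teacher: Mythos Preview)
The statement you are attempting to prove is a \emph{conjecture} in the paper; the paper offers no proof whatsoever. So there is nothing to compare your attempt against: the paper explicitly says ``we did not find a polynomial algorithm'' for computing $WCW(G)$ on $\mathcal{G}(\widehat{C}_{6},\widehat{C}_{7})$ and leaves the question open.

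Your proposal is likewise not a proof, and you say so yourself. The genuine gap is exactly where you locate it: you have not shown that any polynomial family of generating subgraphs produces restrictions spanning those of \emph{all} generating subgraphs. Concretely, three steps are unsubstantiated. First, the ``canonical completion from $O(1)$ seed vertices'' claim has no support in the paper's lemmas; the structural results of Section~2 describe $N_{2}(x)$ for a single vertex $x$ and say nothing about a polynomial enumeration of maximal induced complete bipartite subgraphs. Second, the inductive decomposition --- writing $w(B_{X})-w(B_{Y})$ as a sum of restrictions from strictly smaller generating subgraphs --- requires that removing a vertex from a generating $B$ leaves (or can be repaired to) smaller \emph{generating} pieces, and nothing in the $C_{6}$/$C_{7}$-freeness analysis guarantees this; an ``irreducible'' generating configuration would kill the induction. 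Third, passing to the quotient by the relating-edge equivalence is not obviously safe: the quotient need not lie in $\mathcal{G}(\widehat{C}_{6},\widehat{C}_{7})$, and generating subgraphs of $G$ need not correspond to anything tractable in $G/{\sim}$. Until the spanning step is established, the algorithmic skeleton you describe remains, as you put it, heuristic.
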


It is known that recognizing relating edges is a polynomial task for $\mathcal{G}(\widehat{C}_{4},\widehat{C}_{6})$ \cite{lt:relatedc4}, and for $\mathcal{G}(\widehat{C}_{5},\widehat{C}_{6})$ \cite{lt:wc456}. We proved that also for $\mathcal{G}(\widehat{C}_{6},\widehat{C}_{7})$ the problem is polynomial. However, in most parts of the proof, the only forbidden cycles were of length 6. Hence, we conjecture that all three theorems are instances of the following.

\begin{conjecture}
\label{polc6} The following problem is polynomial solvable:\\
Input: A graph $G \in \mathcal{G}(\widehat{C}_{6})$ and an edge $xy \in E(G)$. \\
Question: Is $xy$ relating?
\end{conjecture}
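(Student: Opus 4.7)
My approach would be to follow the template of the proof of Theorem \ref{relatingpol}, preserving every step that only exploits the absence of $C_{6}$ and re-engineering the few steps that currently invoke the absence of $C_{7}$. The first move is to check Lemma \ref{c6split}: its forward direction is immediate, and its backward direction forbids an edge between $x''$ and $y''$ by exhibiting the cycle $(x'',x',x,y,y',y'')$ of length six. This uses only the exclusion of $C_{6}$, so the equivalence survives verbatim in $\mathcal{G}(\widehat{C}_{6})$. Consequently, it suffices to solve the extendability problem polynomially for an input $G \in \mathcal{G}(\widehat{C}_{6})$ and a vertex $x \in V(G)$, and then apply it twice as in Algorithm \ref{algrelating}.

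All of Section 2.1 transfers without change. Lemmas \ref{p3endpoints}--\ref{cbigbipartite6}, Corollaries \ref{n2types} and \ref{gxystar}, Lemma \ref{omit}, and Lemma \ref{gxy2dom} use only $C_{6}$-freeness, so we may pass to the induced subgraph $H_{x}$ in which every connected component of $N_{2}(x)$ is a star $K_{1,r}$, and conclude that for every edge $ab$ inside a component $A$ of $H_{x}$, the sets $N(a) \cap N(x)$ and $N(b) \cap N(x)$ are disjoint and nonempty. Within a star with center $a$ and leaves $b_{1},\ldots,b_{r}$, the independent subsets eligible for $S$ are $\emptyset$, $\{a\}$, and arbitrary nonempty subsets of the leaves; since enlarging the leaf side can only grow the dominated portion of $N(x)$, the per-component choice collapses to three options (empty, center-only, all-leaves).

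The real work is replacing Section 2.2. Without $C_{7}$-freeness, Lemmas \ref{star} and \ref{star2} fail, so the argument that isolates a component from the rest of $N_{2}(x)$ once its center has two $N(x)$-neighbors is gone, and the flow reduction of Theorem \ref{c6c7pol} no longer applies directly. My proposal is to encode the collapsed problem as an instance of $2$-SAT. For each nontrivial star component $A$ introduce a boolean variable $t_{A}$, with $t_{A}=1$ meaning ``take the center'' and $t_{A}=0$ meaning ``take all leaves''; for each $v \in N(x)$ add a clause consisting of the literals corresponding to sides that dominate $v$. Singleton components and components with $|N(V(A)) \cap N(x)| = 1$ are resolved in preprocessing, in direct analogy with the constructions of $A^{*}$ and $S^{*}$ in the paper.

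The step I expect to be the main obstacle is bounding the width of each clause to two. The natural target lemma is that if $v \in N(x)$ is adjacent to the centers of two distinct nontrivial star components $A_{1}$ and $A_{2}$, then taking $u_{i} \in (N(a_{i}) \setminus N(b_{i})) \cap N(x)$ provided by Lemma \ref{gxy2dom} produces the candidate $6$-cycle $(x,u_{1},a_{1},v,a_{2},u_{2})$; one would need to analyze the degenerate cases $u_{1}=u_{2}$ and $u_{i}=v$, and prove an analogous bound on leaf-side literals. Making all distinctness checks go through with only $C_{6}$-freeness in hand is where the proof may need genuinely new ideas beyond Lemma \ref{gxy2dom}. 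If clause width two cannot be guaranteed unconditionally, the fallback plans are (i) a matroid-intersection or LP-relaxation formulation exploiting the star structure of components together with $C_{6}$-freeness, or (ii) isolating the ``bad'' vertices of $N(x)$ with large component-degree and handling them by a local enumeration, reducing the remainder to the $2$-SAT model above. Combined with the Lemma \ref{c6split} reduction, any of these would confirm Conjecture \ref{polc6}.
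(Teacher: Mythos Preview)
The statement is Conjecture~\ref{polc6}, which the paper explicitly leaves open; there is no proof in the paper to compare against, and what you have written is a research outline rather than a proof. Your analysis of what carries over is correct: Lemma~\ref{c6split} and all of Section~2.1 use only $C_{6}$-freeness, so the reduction to extendability and the passage to $H_{x}$ (where every component of $G[N_{2}(x)]$ is a star) survive in $\mathcal{G}(\widehat{C}_{6})$, and you rightly isolate Lemmas~\ref{star} and~\ref{star2} as the places where $C_{7}$-freeness is genuinely used.

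The gap is in your proposed replacement. Your target lemma---that each $v\in N(x)$ yields a clause of width at most two---is false already in $\mathcal{G}(\widehat{C}_{6},\widehat{C}_{7})$. Take the graph on $\{x,v,u_{1},u_{2},u_{3},a_{1},a_{2},a_{3},b_{1},b_{2},b_{3}\}$ with edges $xv$, $xu_{i}$, $va_{i}$, $a_{i}b_{i}$, $b_{i}u_{i}$ for $i=1,2,3$; its only cycles are the three $5$-cycles $(x,v,a_{i},b_{i},u_{i})$ and the three $8$-cycles obtained by gluing pairs of them along the edge $xv$. Here $A^{*}=\emptyset$, $S^{*}=\emptyset$, the components of $N_{2}(x)$ are the three edges $\{a_{i},b_{i}\}$, and the clause for $v$ is $t_{A_{1}}\vee t_{A_{2}}\vee t_{A_{3}}$, of width three. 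In your candidate $6$-cycle $(x,u_{1},a_{1},v,a_{2},u_{2})$ the only choice allowed by Lemma~\ref{gxy2dom} is $u_{1}=u_{2}=v$, so the degeneracy you flagged as an edge case is in fact unavoidable. Note that the paper's flow construction in Theorem~\ref{c6c7pol} handles this very example via a cardinality obstruction (three components cannot cover four vertices of $N(x)$ when each contributes one unit of flow), not via any hidden $2$-SAT structure; so the $2$-SAT encoding is the wrong abstraction of that argument. Your fallback suggestions (matroid intersection, LP relaxation, local enumeration) are plausible directions but are not developed enough to assess, and as it stands the proposal identifies the right bottleneck without closing it.
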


\end{document}